\documentclass[letterpaper,11pt]{article}

\usepackage[arXiv]{optional}

\usepackage{verbatim}

\usepackage[margin=1in]{geometry}
\usepackage{setspace}
\usepackage{graphicx}
\usepackage{epsfig}

\usepackage{microtype} %
\usepackage{array} %

\usepackage{amssymb}
\usepackage{amsmath}
\usepackage{amsthm}
\usepackage{mathtools} %
\usepackage{stmaryrd} %
\usepackage{stackrel} %
\usepackage{latexsym} %

\usepackage{tikz}
\usetikzlibrary{arrows}
\usetikzlibrary{arrows.meta}
\usetikzlibrary{decorations.pathreplacing}
\usetikzlibrary{cd}
\usepackage[all]{xy} %

\usepackage{algorithm}
\usepackage[noend]{algpseudocode}
\algnewcommand{\LineComment}[1]{\Statex\hspace{\algorithmicindent}\(\triangleright\) #1}
\algnewcommand\algorithmicforeach{\textbf{for each}}
\algdef{S}[FOR]{ForEach}[1]{\algorithmicforeach\ #1\ \algorithmicdo}
\usepackage{etoolbox} %

\usepackage{caption}
\usepackage{subcaption}

\usepackage{makecell} %
\usepackage{booktabs} %
\usepackage{multirow} %
\usepackage{tablefootnote} %

\usepackage[hidelinks]{hyperref} %
\usepackage[symbol]{footmisc}
\usepackage{makeidx} %
\usepackage{url} %
\usepackage{color}
\usepackage{xcolor}
\usepackage{xspace} %
\usepackage[normalem]{ulem} %
\usepackage{soul} %
\usepackage{ifpdf} %
\usepackage{extarrows}
\usepackage{stackengine}    
\usepackage{lipsum}
\usepackage{bm}
\usepackage{cleveref} %
\usepackage{cite} %

\makeatletter
\expandafter\patchcmd\csname\string\algorithmic\endcsname{\itemsep\z@}{\itemsep=0.25ex}{}{}
\makeatother

\theoremstyle{plain}
    \newtheorem{theorem}{Theorem}
    \newtheorem{proposition}[theorem]{Proposition}

    \newtheorem{observation}[theorem]{Observation}
    
    \newtheorem*{algr*}{Algorithm}
\theoremstyle{definition}

    \newtheorem{remark}[theorem]{Remark}
    \newtheorem*{remark*}{Remark}
    \newtheorem*{example*}{Example}
    
\usepackage{enumitem}
\setlist[itemize]{leftmargin=\parindent}
\setlist[enumerate]{leftmargin=\parindent}
\setlist[description]{font=\bfseries,leftmargin=\parindent}

\title{Updating zigzag representatives efficiently}

\author{Tamal K. Dey\thanks{Department of Computer Science, Purdue University, West Lafayette, IN, USA. \texttt{tamaldey@purdue.edu}}
\and Tao Hou\thanks{Department of Computer Science, University of Oregon, Eugene, OR, USA. \texttt{taohou@uoregon.edu}}
\and  Dmitriy Morozov\thanks{Lawrence Berkeley National Laboratory, Berkeley, CA, USA. \texttt{dmitriy@mrzv.org}}
}

\date{}

\crefname{equation}{Eq.}{Eqs.}
\crefname{figure}{Fig.}{Figs.}

\newcommand{\para}[1]{\opt{arXiv}{\paragraph{#1}}\opt{SoCG}{\subparagraph*{#1}}}

\let\bar\overline

\let\emptyset\varnothing

\newcommand{\Hm}{H}
\newcommand{\Chn}{C}
\newcommand{\Zyc}{Z}

\newcommand{\field}{\Bbbk}

\newcommand{\Pers}{\mathsf{Pers}}
\newcommand{\LPers}{\mathsf{LPers}}

\renewcommand{\bar}[1]{\overline{#1}}

\newcommand{\lbarrowspace}{\;}

\let\leftrightarrowsp\lrarrowsp

\newcommand{\incto}{\hookrightarrow}
\newcommand{\inctosp}[1]{\xhookrightarrow{\lbarrowspace#1\lbarrowspace}}
\newcommand{\bakincto}{\hookleftarrow}
\newcommand{\bakinctosp}[1]{\xhookleftarrow{\lbarrowspace#1\lbarrowspace}}

\newcommand{\given}{\,|\,}
\newcommand{\Set}[1]{\{#1\}}

\newcommand{\Fcal}{\mathcal{F}}

\newcommand{\Ibb}{\mathbb{I}}

\newcommand{\Zbb}{\mathbb{Z}}

\newcommand{\aG}{\alpha}

\newcommand{\DG}{\Delta}

\newcommand{\gG}{\gamma}
\newcommand{\GG}{\Gamma}

\newcommand{\oG}{\omega}

\newcommand{\sG}{\sigma}

\newcommand{\tG}{\tau}

\newcommand{\Dim}{p}
\newcommand{\dimx}{q}
\newcommand{\birth}{b}

\newcommand{\death}{d}
\newcommand{\fcnt}{m}
\usepackage{ifthen}
\newcommand{\scnt}[1]{\ifthenelse{\equal{#1}{}}{n}{n_{#1}}}
\newcommand{\scntx}[1]{\ifthenelse{\equal{#1}{}}{n^*}{n^*_{#1}}}

\newcommand{\ssx}{\sigma}
\newcommand{\tsx}{\tau}
\newcommand{\xssx}{\kappa}
\newcommand{\xssxx}{\nu}
\newcommand{\filt}{\Fcal}
\newcommand{\cplx}{K}
\newcommand{\dfilt}{\bar{\Fcal}}
\newcommand{\dcplx}{\bar{K}}
\newcommand{\dssx}{\bar{\sigma}}

\newcommand{\ucplx}{L}
\newcommand{\ussx}{\tau}
\newcommand{\filtnz}{{\Fcal_\mathrm{nz}}}
\newcommand{\filtx}{\Fcal^*}
\newcommand{\overbar}[1]{\overline{#1\mkern-6mu}\mkern 6mu}
\newcommand{\dfiltx}{\overbar{\Fcal^*}}
\newcommand{\dcplxx}{\overbar{K^*_i}}
\newcommand{\filtxnz}{\Fcal^*_\mathrm{nz}}
\let\cdec\hat %

\newcommand{\cplxx}{K^*}

\newcommand{\matel}[3]{{#1}[#2,#3]}
\newcommand{\col}[2]{{#1}[\cdot,#2]}
\newcommand{\row}[2]{{#1}[#2,\cdot]}
\newcommand{\Dmat}{D}
\newcommand{\Rmat}{R}
\newcommand{\Vmat}{V}
\newcommand{\Dmatx}{D^*}
\newcommand{\Rmatx}{R^*}
\newcommand{\Vmatx}{V^*}
\newcommand{\Smat}{T}
\newcommand{\Tmat}{S}
\newcommand{\tran}{\mathsf{T}}
\newcommand{\pivot}{\mathrm{piv}}

\makeatletter
\newcommand*{\da@rightarrow}{\mathchar"0\hexnumber@\symAMSa 4B }
\newcommand*{\da@leftarrow}{\mathchar"0\hexnumber@\symAMSa 4C }
\newcommand*{\xdashrightarrow}[2][]{%
  \mathrel{%
    \mathpalette{\da@xarrow{#1}{#2}{}\da@rightarrow{\;}{}}{}%
  }%
}
\newcommand{\xdashleftarrow}[2][]{%
  \mathrel{%
    \mathpalette{\da@xarrow{#1}{#2}\da@leftarrow{}{}{\;}}{}%
  }%
}
\newcommand{\xdashleftrightarrow}[2][]{%
  \mathrel{%
    \mathpalette{\da@xarrow{#1}{#2}\da@leftarrow\da@rightarrow{}{}}{}%
  }%
}
\newcommand*{\da@xarrow}[7]{%
  \sbox0{$\ifx#7\scriptstyle\scriptscriptstyle\else\scriptstyle\fi#5#1#6\m@th$}%
  \sbox2{$\ifx#7\scriptstyle\scriptscriptstyle\else\scriptstyle\fi#5#2#6\m@th$}%
  \sbox4{$#7\dabar@\m@th$}%
  \dimen@=\wd0 %
  \ifdim\wd2 >\dimen@
    \dimen@=\wd2 %
  \fi
  \count@=2 %
  \def\da@bars{\dabar@\dabar@}%
  \@whiledim\count@\wd4<\dimen@\do{%
    \advance\count@\@ne
    \expandafter\def\expandafter\da@bars\expandafter{%
      \da@bars
      \dabar@ 
    }%
  }%
  \mathrel{#3}%
  \mathrel{%
    \mathop{\da@bars}\limits
    \ifx\\#1\\%
    \else
      _{\copy0}%
    \fi
    \ifx\\#2\\%
    \else
      ^{\copy2}%
    \fi
  }%
  \mathrel{#4}%
  \!\!
}
\makeatother

\newcounter{desccounter}

\begin{document}

\maketitle
\thispagestyle{empty}

\begin{abstract}
Computation of zigzag persistence has progressed in recent years, 
with results showing that complexities of many problems 
closely align with those in the non-zigzag setting.
The major efficiency gap now lies in the updating of zigzag representatives.
In this paper, we propose efficient algorithms for updating zigzag representatives
based on a recent algorithm for extracting zigzag representatives from
a $R=DV$ decomposition of a constructed non-zigzag.
The main difficulty for designing our update algorithms lies in the adjacency change
occurring in two operations that elongate or shorten a filtration.
Despite the adjacency change, we find that the update can still be
done efficiently in quadratic time.

\end{abstract}

\newpage
\setcounter{page}{1}

\section{Introduction}
In persistent homology, a barcode of a filtered topological space provides a quantitative summary
of its homological features.
While the bars in the barcode effectively identify the duration of 
significant homological features, 
the underlying geometry is not captured by these durations alone.
\emph{Cycle representatives} add the missing information by locating the specific features in the topological spaces as they are born and die.
Designing algorithms
to find representatives for the ordinary (non-zigzag) persistence has been
fruitful because
an ordinary representative consists of a \emph{single} cycle at birth that is included
into all subsequent spaces~\cite{cohen2006vines,MS25}.
As a result, a representative takes $O(m)$ space to store
and summing two representatives takes $O(m)$ time,
where $m$ is the number of insertions in the filtration.
Since most algorithms for computing ordinary representatives
require adding representatives,
the compact representation makes it possible to design
efficient algorithms for computing the ordinary representatives.

In zigzag persistence, since simplices can
enter and leave a filtration, a representative cycle at one step 
may not exist at a later time. 
Hence,
representatives \emph{compatible} with the zigzag,
that is, 
ones diagonalizing the linear maps,
take $O(mn)$ space each,
where $m$ is the number of insertions and deletions in the filtration
and $n$ is the maximum size of any complex.
As as result,
summing two such representatives 
takes $O(mn)$ time.
Perhaps this is why the time complexity for \emph{updating} zigzag representatives
still cannot catch up with the ordinary case,
considering that almost all other computational results for zigzags
match the complexity for the ordinary case~\cite{DBLP:conf/compgeom/DeyH24,DBLP:conf/compgeom/Dey0M25}.
Specifically, the work~\cite{DBLP:conf/compgeom/DeyH24} proposes algorithms
for updating zigzag \emph{barcodes} over eight elementary operations allowing
a filtration to be transformed into any other one.
Time complexities for all the operations in~\cite{DBLP:conf/compgeom/DeyH24} match the ordinary case,
e.g., they show that elongating or shortening a zigzag filtration can be done in $O(m^2)$ time.
However, their approach does not directly generalize
to an efficient update of zigzag {representatives},
because the representatives 
maintained for the constructed {up-down} filtrations do not directly translate
to representatives for the original zigzag.

In this paper, we provide efficient update algorithms for zigzag representatives 
whose complexity matches the ordinary case, with the only extra cost
to output the representatives, a cost that cannot be avoided.
Our update algorithms are built on a recent work~\cite{DBLP:conf/compgeom/Dey0M25}
on extracting zigzag representatives efficiently from 
the $R=DV$ decomposition~\cite{cohen2006vines} of a constructed non-zigzag 
filtration.
In a nutshell, the authors~\cite{DBLP:conf/compgeom/Dey0M25} show that 
a zigzag representative in homological dimension $p$ can be extracted 
from a column of the matrix $R$ or $V$ in $O(p\cdot m\log m+C)$ time, 
where $C$ is the size of the representative.
This is almost as efficient as it can get because in most settings
$C$ dominates the $p\cdot m\log m$ term.
Relying on these results,
we provide efficient update algorithms for 
the $R=DV$ decomposition constructed for the ordinary filtration derived from the zigzag. This allows us to extract the representatives using
methods in~\cite{DBLP:conf/compgeom/Dey0M25}.
Time complexities for updating $R=DV$ decomposition
over the eight operations match those in the non-zigzag 
case: operations that {transpose}
two inclusions run in $O(m)$ time 
and operations that elongate or shorten a filtration
run in 
$O(m^2)$ time.

Since the update of zigzag representatives
eventually boils down to updating the ordinary representatives in $R=DV$,
why not just use the existing update algorithms~\cite{cohen2006vines,giunti2023pruning}?
We find that, among the eight operations, six of them can be easily updated
using existing
algorithms~\cite{cohen2006vines}.
However, the remaining two operations, specifically, the outward expansions
and contractions (see \Cref{sec:update-oper}),
cannot be implemented using existing approaches because of 
an \emph{adjacency change} on the simplices
(see \Cref{sec:bound-chg}).
In particular, the transposition algorithm~\cite{cohen2006vines}
and the removal algorithm~\cite{giunti2023pruning}
assume that boundaries of simplices stay the same before
and after the update, 
which is not the case in the outward expansions
and contractions.
Hence, we mainly focus on addressing the update
of $R=DV$ for the two operations causing adjacency changes;
see \Cref{sec:out-contra,sec:out-exp}.

\section{Background}
\begin{figure}[t]
  \centering
  \captionsetup[subfigure]{justification=centering}

  \begin{subfigure}[t]{0.08\textwidth}
  \centering
  \includegraphics[width=\linewidth]{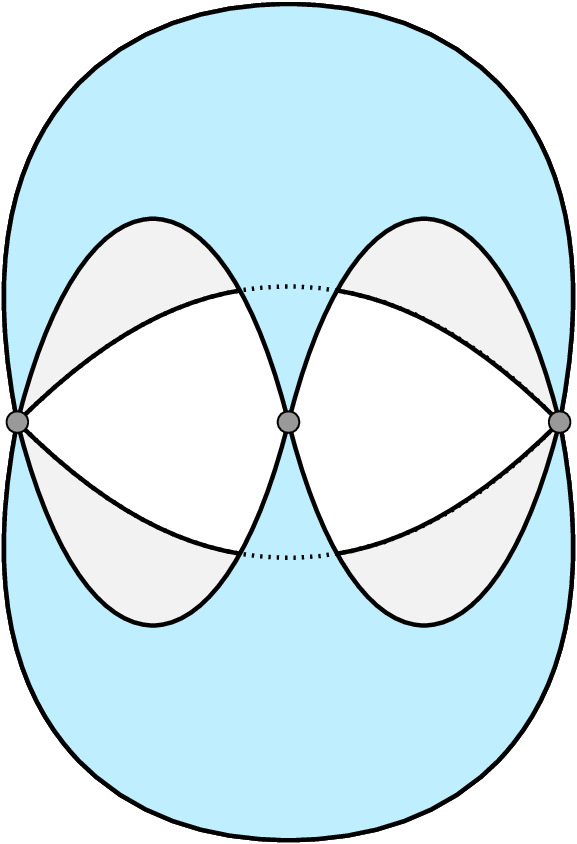}
  \label{fig:two-tri-0}
  \end{subfigure}
  \hspace{4em}
  \begin{subfigure}[t]{0.08\textwidth}
  \centering
  \includegraphics[width=\linewidth]{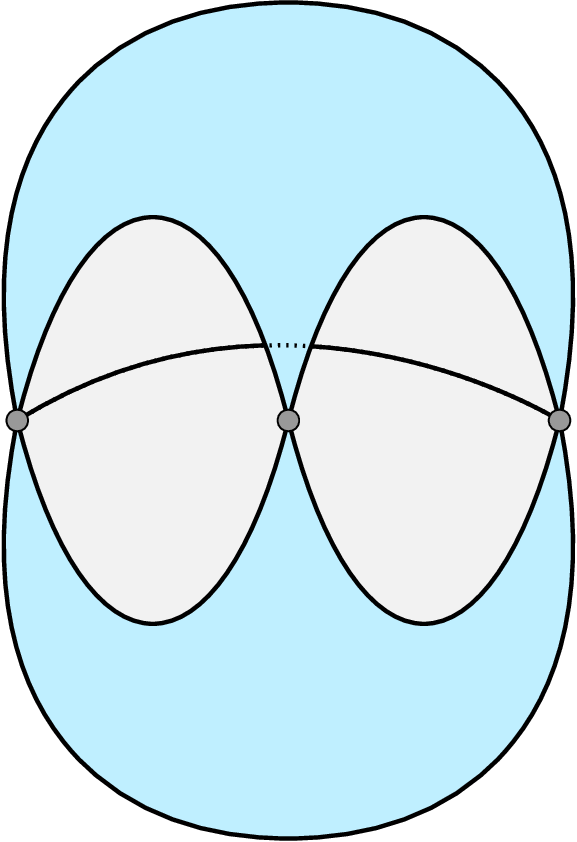}
  \label{fig:two-tri-1}
  \end{subfigure}
  \hspace{4em}
  \begin{subfigure}[t]{0.08\textwidth}
  \centering
  \includegraphics[width=\linewidth]{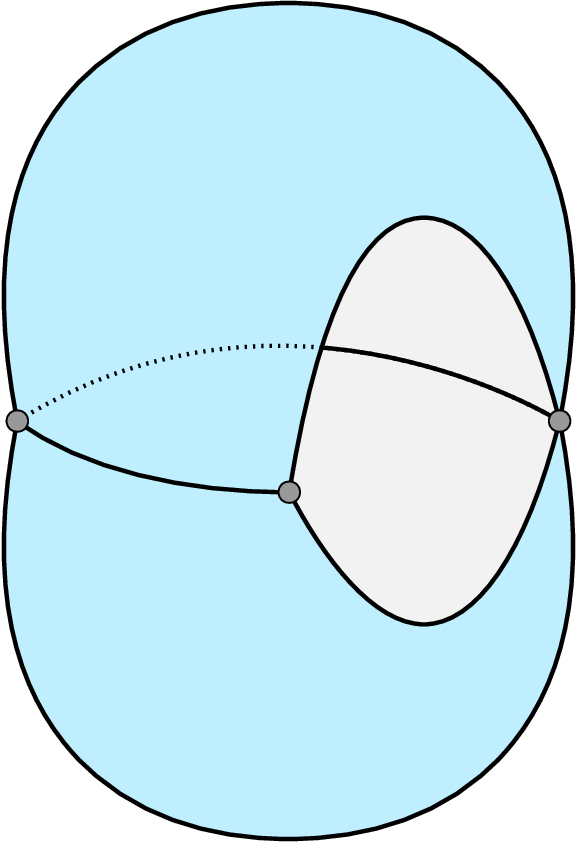}
  \label{fig:two-tri-2}
  \end{subfigure}
  \hspace{4em}
  \begin{subfigure}[t]{0.08\textwidth}
  \centering
  \includegraphics[width=\linewidth]{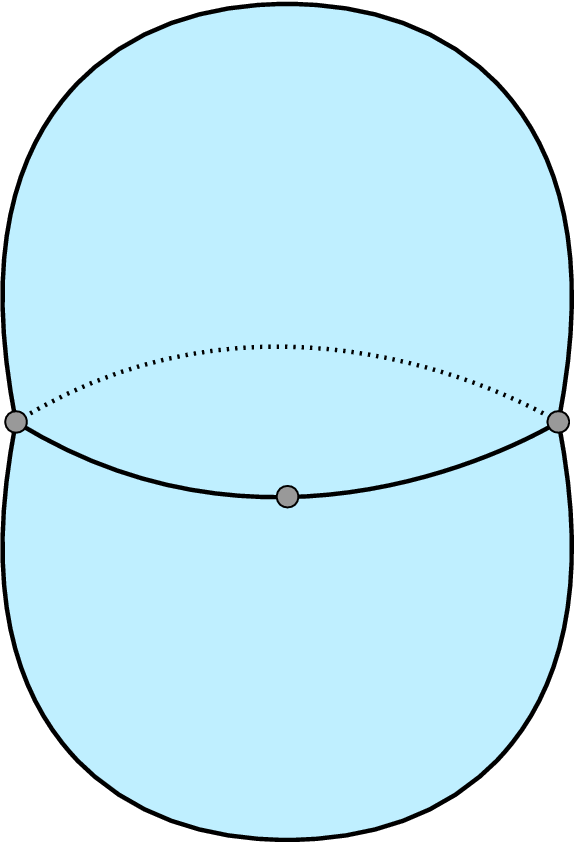}
  \label{fig:two-tri-3}
  \end{subfigure}

  \caption{Examples~\cite{DBLP:conf/esa/DeyH22} of $\DG$-complexes with two triangles {having the same set of vertices} sharing 0, 1, 2, or 3 edges on their boundaries.}
  \label{fig:bubble}
\end{figure}

\para{Zigzag persistence and representative.}
While the inputs all consist of simplicial complexes,
algorithms in this paper internally work on 
\emph{$\DG$-complexes}~\cite{79945ceb-9a4c-3c90-a4ac-148c36996187,hatcher2002algebraic}
by using of the conversion in~\cite{DBLP:conf/esa/DeyH22,DBLP:conf/compgeom/Dey0M25}.
Building blocks of $\DG$-complexes are
called \emph{cells} or \emph{$\DG$-cells},
which are similar to simplices
but could share common faces in more
relaxed ways; see \Cref{fig:bubble}
for examples
and also~\cite{79945ceb-9a4c-3c90-a4ac-148c36996187,hatcher2002algebraic} for a formal definition.
A {\it zigzag filtration} (or simply {\it filtration})
is  a sequence of $\DG$-complexes
\begin{equation*}
\Fcal: K_0 \leftrightarrow K_1 \leftrightarrow 
\cdots \leftrightarrow K_\fcnt,
\end{equation*}
in which each
$K_i\leftrightarrow K_{i+1}$ is either a forward inclusion $K_i\incto K_{i+1}$
(an insertion of cells)
or a backward inclusion $K_i\bakincto K_{i+1}$
(a deletion of cells).
Applying the 
homology functor $\Hm$ over all degrees
with coefficients in a field $\field$,
we obtain a {\it zigzag module}
\[
\Hm(\Fcal): 
\Hm(K_0) 
\leftrightarrow
\Hm(K_1) 
\leftrightarrow
\cdots 
\leftrightarrow
\Hm(K_\fcnt), \]
where
each $\Hm(K_i)\leftrightarrow \Hm(K_{i+1})$
is a linear map induced by inclusion.
In this paper, we present our ideas with $\field=\Zbb_2$
for simplicity;
some details regarding generalizing to arbitrary fields
are discussed
in \Cref{sec:gen-field}.
The zigzag module $\Hm(\Fcal)$
has a decomposition~\cite{carlsson2010zigzag,Gabriel72} of the form
\begin{equation}\label{eqn:inv-decomp}
\Hm(\Fcal)=\bigoplus_{k=1}^\ell\Ibb^{[\birth_k,\death_k]},
\end{equation}
in which each $\Ibb^{[\birth_k,\death_k]}$
is an \emph{interval submodule} of $\Hm(\Fcal)$.
The (multi-)set of intervals
$\Pers(\Fcal):=\Set{[\birth_k,\death_k]\given k=1,\ldots,\ell}$
is an invariant of $\Hm(\Fcal)$
and is called the \emph{zigzag barcode} of $\Fcal$.

Let $\Zyc(K_i)$ denote the cycle space (over all degrees) of $K_i$.
For a $[\birth_k,\death_k]\in\Pers(\Fcal)$,
a sequence of cycles 
\[z^k_{\birth_k}\in\Zyc(K_{\birth_k}),z^k_{\birth_k+1}\in\Zyc(K_{\birth_k+1}),\ldots,z^k_{\death_k}\in\Zyc(K_{\death_k})\]
is called its \emph{zigzag representative} 
if each $[z^k_i]$ generates $\Ibb^{[\birth_k,\death_k]}(i)$ for the interval summand
$\Ibb^{[\birth_k,\death_k]}$ in \Cref{eqn:inv-decomp}.
It follows:
(1) $[z^k_i]\mapsto[z^k_{i+1}]$ or $[z^k_i]\mapsfrom[z^k_{i+1}]$
based on the direction of the map;
(2) for each $K_i$, $\{[z^k_i]\mid i\in[\birth_k,\death_k]\}$
is a basis for $\Hm(K_i)$.

In other words, all zigzag representatives for bars in $\Pers(K_i)$
form \emph{compatible} bases for spaces in $\Hm(\filt)$
such that linear maps in $\Hm(\filt)$ are diagonalized.

An inclusion in a filtration is called \emph{cell-wise}
if it is an addition or deletion of a single cell $\sG$,
which we sometimes denote as, e.g., $K_i\leftrightarrowsp{\sG}K_{i+1}$.
A filtration is called \emph{cell-wise} if it contains only cell-wise inclusions.
For computation, 
zigzag filtrations in this paper are by default
cell-wise; we also assume they
start and end with empty complexes
(as adopted in~\cite{DBLP:conf/compgeom/Dey0M25,maria2014zigzag,maria2024discrete}).
We note that any filtration can be converted into this form
by expanding the inclusions and attaching complexes to both ends.
Zigzag representatives for general filtrations
can also be easily retrieved from those computed
on simplex-wise ones; see~\cite{DBLP:conf/compgeom/Dey0M25} for details.

\para{Matrix notations.}
For a matrix $A$, let $\matel{A}{i}{j}$ denote its element
on the $i$-th row and $j$-th column.
$\row{A}{i}$ denotes its $i$-th row;
$\col{A}{j}$, its $j$-th column.
The pivot of $\col{A}{j}$ is the row index of the lowest
non-zero entry of $\col{A}{j}$ and is denoted $\pivot(\col{A}{j})$.
Since columns and rows of matrices in this paper correspond to
$\DG$-cells, $i$ and $j$ could also be cells
and $\pivot(\col{A}{j})$ could also refer to the cell
corresponding to the row.

\subsection{Update operations}
\label{sec:update-oper}
We now present all the eight update operations
considered in the paper. Note that these operations are 
sufficient to transform any simplex-wise zigzag filtration into another~\cite{DBLP:conf/compgeom/DeyH24}.
In this subsection, $\Fcal$ and $\filtx$  are both 
simplex-wise
zigzag filtrations consisting of simplicial complexes
which start and end with $\emptyset$.

\medskip
\noindent
\textit{{Forward switch}}~\cite{maria2014zigzag}
swaps two forward inclusions and
requires $\sG\nsubseteq\tG$:
\vspace{-5pt}
\begin{equation}
\label{eqn:fwd-sw}
\begin{tikzpicture}[baseline=(current  bounding  box.center)]
\tikzstyle{every node}=[minimum width=24em]
\node (a) at (0,0) {$\Fcal:K_0 \leftrightarrow\cdots\leftrightarrow K_{i-1}\inctosp{\sG}K_i\inctosp{\tG}K_{i+1}\leftrightarrow\cdots\leftrightarrow K_\fcnt$}; 
\node (b) at (0,-0.6){$\filtx:K_0\leftrightarrow\cdots\leftrightarrow K_{i-1}\inctosp{\tG} \cplxx_i\inctosp{\sG} K_{i+1}\leftrightarrow\cdots\leftrightarrow K_\fcnt$};
\path[->] (a.0) edge [bend left=90,looseness=1.5,arrows={-latex},dashed] (b.0);
\end{tikzpicture}
\end{equation}

\medskip
\noindent
\textit{{Backward switch}}
swaps two backward inclusions and
requires $\tG\nsubseteq\sG$:
\vspace{-5pt}
\begin{equation}
\label{eqn:bak-sw}
\begin{tikzpicture}[baseline=(current  bounding  box.center)]
\tikzstyle{every node}=[minimum width=24em]
\node (a) at (0,0) {$\Fcal: K_0 \leftrightarrow
\cdots
\leftrightarrow 
K_{i-1}\bakinctosp{\sG} 
K_i 
\bakinctosp{\tG} K_{i+1}
\leftrightarrow
\cdots \leftrightarrow K_\fcnt$}; 
\node (b) at (0,-0.6){$\filtx: K_0 \leftrightarrow
\cdots
\leftrightarrow 
K_{i-1}\bakinctosp{\tG} 
\cplxx_i 
\bakinctosp{\sG} K_{i+1}
\leftrightarrow
\cdots \leftrightarrow K_\fcnt$};
\path[->] (a.0) edge [bend left=90,looseness=1.5,arrows={-latex},dashed] (b.0);
\end{tikzpicture}
\end{equation}

\noindent
\textit{{Outward/inward switch}}~\cite{carlsson2010zigzag}
swaps two inclusions of opposite directions
and requires $\sG\neq\tG$:
\vspace{-5pt}
\begin{equation}
\label{eqn:out-in-sw}
\begin{tikzpicture}[baseline=(current  bounding  box.center)]
\tikzstyle{every node}=[minimum width=24em]
\node (a) at (0,0) {$\Fcal: K_0 \leftrightarrow
\cdots
\leftrightarrow 
K_{i-1}\inctosp{\sG} 
K_i 
\bakinctosp{\tG} K_{i+1}
\leftrightarrow
\cdots \leftrightarrow K_\fcnt$}; 
\node (b) at (0,-0.6){$\filtx: K_0 \leftrightarrow
\cdots
\leftrightarrow 
K_{i-1}\bakinctosp{\tG} 
\cplxx_i 
\inctosp{\sG} K_{i+1}
\leftrightarrow
\cdots \leftrightarrow K_\fcnt$};
\path[->] (a.0) edge [bend left=90,looseness=1.5,arrows={latex-latex},dashed] (b.0);
\end{tikzpicture}
\end{equation}

\noindent
The switch from $\Fcal$ to $\filtx$ is 
an {{outward}} 
switch
and the switch from $\filtx$ to $\Fcal$ is  an {{inward}} switch.
Notice that if $\sG=\tG$, then 
e.g., for outward switch, 
we cannot delete $\tG$ from $K_{i-1}$ in $\filtx$
because $\tG\not\in K_{i-1}$.

\medskip
\noindent
\textit{{Inward contraction/expansion}}~\cite{maria2014zigzag}
is as follows:
\vspace{-5pt}
\begin{equation}
\label{eqn:in-contrac-expan}
\begin{tikzpicture}[baseline=(current  bounding  box.center)]
\node (a) at (0,0) {$\Fcal: K_0 \leftrightarrow
\cdots\leftrightarrow K_{i-2}
\leftrightarrow 
K_{i-1}\inctosp{\sG} 
K_i 
\bakinctosp{\sG} K_{i+1}
\leftrightarrow K_{i+2}\leftrightarrow
\cdots \leftrightarrow K_\fcnt$}; 
\node (b) at (0,-0.9){$\filtx: K_0 \leftrightarrow
\cdots\leftrightarrow K_{i-2}
\leftrightarrow 
\cplxx_{i}
\leftrightarrow K_{i+2}\leftrightarrow
\cdots \leftrightarrow K_\fcnt$};
\draw[->,arrows={latex-latex},dashed] (a.0) .. controls (+7,-0.1) and (+6.5,-0.8) .. (b.0);
\end{tikzpicture}
\end{equation}
\noindent
From $\Fcal$ to $\filtx$ we have an inward contraction 
and from $\filtx$ to $\Fcal$ we have an inward expansion.
We have $\cplxx_{i}=K_{i-1}=K_{i+1}$.
Because neither one of these complexes contains $\sG$, they also do not contain any of its cofaces.

\medskip
\noindent
\textit{{Outward contraction/expansion}}
is superficially similar to the inward version, but instead of removing successive addition and removal of a simplex, it fuses two different index-intervals over which the simplex appears in the filtration. The two center arrows now point away from each other:
\vspace{-5pt}
\begin{equation}
\label{eqn:out-contrac-expan}
\begin{tikzpicture}[baseline=(current  bounding  box.center)]
\node (a) at (0,0) {$\Fcal: K_0 \leftrightarrow
\cdots\leftrightarrow K_{i-2}
\leftrightarrow 
K_{i-1}\bakinctosp{\sG} 
K_i 
\inctosp{\sG} K_{i+1}
\leftrightarrow K_{i+2}\leftrightarrow
\cdots \leftrightarrow K_\fcnt$}; 
\node (b) at (0,-0.9){$\filtx: K_0 \leftrightarrow
\cdots\leftrightarrow K_{i-2}
\leftrightarrow 
\cplxx_{i}
\leftrightarrow K_{i+2}\leftrightarrow
\cdots \leftrightarrow K_\fcnt$};
\draw[->,arrows={latex-latex},dashed] (a.0) .. controls (+7,-0.1) and (+6.5,-0.8) .. (b.0);
\end{tikzpicture}
\end{equation}

\noindent Notice that we also have $\cplxx_{i}=K_{i-1}=K_{i+1}$.

\subsection{Conversion to non-zigzag filtration}
\label{sec:conversion}
We review results from prior works that lay the groundwork for this paper.
Specifically, we first describe
how a zigzag filtration $\filt$ can be converted to a non-zigzag filtration $\filtnz$~\cite{DBLP:conf/esa/DeyH22}.
We then present a recent finding~\cite{DBLP:conf/compgeom/Dey0M25} on how 
representatives for $\filtnz$ can be used to recover
representatives for the original $\filt$
from an $R=DV$ decomposition~\cite{cohen2006vines}
for $\filtnz$.
In \Cref{sec:ez-up}, we describe how to update the representatives after six of the eight operations described in \Cref{sec:update-oper}; the update rules follow directly from the conversion.

Given a \emph{simplex}-wise zigzag filtration 
\begin{equation*}
\filt:
\emptyset=
\cplx_0\leftrightarrowsp{\ssx_{0}} \cplx_1\leftrightarrowsp{\ssx_{1}}
\cdots 
\leftrightarrowsp{\ssx_{\fcnt-1}} \cplx_\fcnt
=\emptyset,
\end{equation*}
where a simplex can be inserted and deleted multiple
times,
we first treat each occurrence of a simplex in $\filt$ as a \emph{distinct}
$\DG$-cell, and have the following
\emph{cell}-wise zigzag filtration 
\begin{equation*}
\label{eqn:fzz-up-sec-filt}
\dfilt:
\emptyset=
\dcplx_0\leftrightarrowsp{\dssx_{0}} \dcplx_1\leftrightarrowsp{\dssx_{1}}
\cdots 
\leftrightarrowsp{\dssx_{\fcnt-1}} \dcplx_\fcnt
=\emptyset.
\end{equation*}
While $\dfilt$ ``looks'' exactly the same as 
$\filt$, treating the simplices as $\DG$-cells 
enables us to perform the subsequent conversion to a
\emph{non-zigzag} filtration
\begin{equation*}
\filtnz:
\{\oG\}=\ucplx_0\inctosp{\ussx_0}
\cdots\inctosp{\ussx_{\scnt{}-1}} 
\ucplx_{\scnt{}}
\inctosp{\cdec{\ussx}_{\scnt{}}}
{\ucplx}_{\scnt{}+1}
\cdots\inctosp{\cdec{\ussx}_{\fcnt-1}}
{\ucplx}_{\fcnt}.
\end{equation*}
Above, 
$\oG$ is an additional vertex for coning;
$\ussx_0,\ussx_1,\ldots,\ussx_{\scnt{}-1}$ are $\DG$-cells
inserted in $\dfilt$ with the same insertion order as in $\dfilt$;
$\ussx_{\scnt{}},\ussx_{\scnt{}+1},\ldots,\ussx_{\fcnt-1}$ are cones of $\DG$-cells
deleted in $\dfilt$ with the deletion order in $\dfilt$ reversed
to form the insertion order in $\filtnz$.
See \Cref{fig:convert} for an example of the conversion.
Notice that adjacency of cells in $\dfilt$
is carried over to $\filtnz$.

\begin{figure}[!tbh]
  \centering
  \opt{SoCG}{\includegraphics[width=\linewidth]{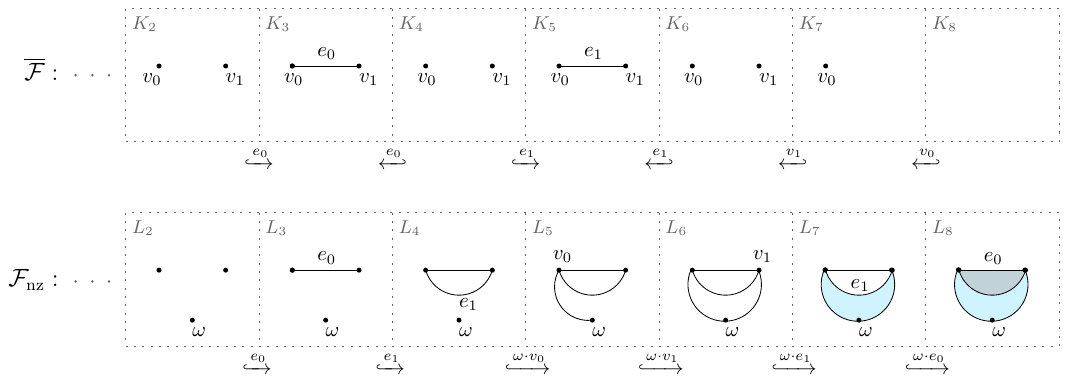}}
  \opt{arXiv}{\includegraphics[width=0.95\linewidth]{fig/convert}}
  \caption{An example
  of converting a zigzag filtration of simplices to a non-zigzag filtration of $\DG$-cells,
  where $\filt$ and $\dfilt$ look the same and only $\dfilt$ is listed. 
  Note that $\filt,\filtnz$ both start with the complexes $\emptyset$ and $\{v_0\}$ which are omitted
    for brevity.
  The simplex $\{v_0,v_1\}$ is inserted twice
  in $\filt$ with the two insertions corresponding two different $\DG$-cells $e_0$, $e_1$ in $\dfilt$.
  Insertions of $e_0$, $e_1$ in $\dfilt$ correspond to the same insertions of them in the first half of $\filtnz$,
  while deletions of $e_0$, $e_1$, $v_1$, $v_0$ in $\dfilt$ corresponds to insertions of the cones
  $\oG\cdot v_0$, $\oG\cdot v_1$, $\oG\cdot e_1$, $\oG\cdot e_0$ in the second half of $\filtnz$, with the order reversed.}
  \label{fig:convert}
\end{figure}

\begin{figure}[!tbh]
  \centering
  \opt{SoCG}{\includegraphics[width=0.7\linewidth]{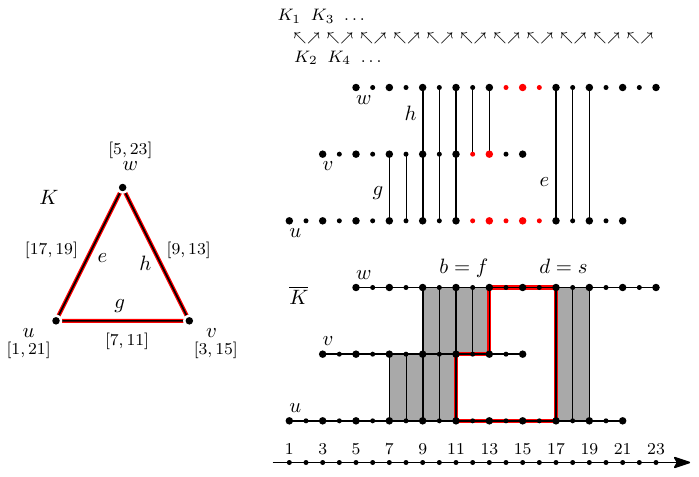}}
  \opt{arXiv}{\includegraphics[width=0.65\linewidth]{fig/lifted-cycle}}
  \caption{An example for retrieving zigzag representatives from the converted non-zigzag
  filtration.
  The input zigzag filtration is on top. 
  The total (last) complex $K$ of the converted non-zigzag filtration is on left,
  with labels being lifetimes of simplices in the zigzag filtration.
  The constructed  prism $\bar{K}$ and its levelset zigzag are at bottom.
  The red cycle in $K$ is used to construct a representative for the zigzag bar $[12,16]$.
  The red cycle in $\bar{K}$ is the apex representative constructed.
  Slices of the apex representative provide the zigzag representative (vertices colored in red) for $[12,16]$.
             Figure from~\cite{DBLP:conf/compgeom/Dey0M25}.}
  \label{fig:lifted-cycle}
\end{figure}

\para{Retrieving barcode and representatives for $\filt$.}
Let $\Dmat$ 
be the filtered boundary matrix for $\filtnz$,
i.e., $\Dmat$ 
encodes the boundary operator $\partial$
with the cells ordered according to $\filtnz$.
Let $\Rmat$, $\Vmat$ be square matrices with the same dimension as $\Dmat$.
We call a decomposition of the form
\begin{equation}
\label{eqn:decomp-all}
\Rmat=\Dmat\Vmat
\end{equation}
\emph{valid} if $\Vmat$ is upper-triangular 
and has all 1's on the diagonal.
The decomposition is further called
\emph{reduced} if non-zero columns of $\Rmat$ have distinct pivots.
One can read off the barcode $\Pers(\filtnz)$
and the representatives from a reduced decomposition~\cite{de2011dualities}.
Moreover,
there is a bijection between $\Pers(\filt)$ and $\Pers(\filtnz)$
so that  $\Pers(\filt)$ could be recovered from $\Pers(\filtnz)$
with an overall complexity of $O(\fcnt)$ for the conversion~\cite{DBLP:conf/esa/DeyH22}.
Recently, the authors of~\cite{DBLP:conf/compgeom/Dey0M25} showed that one could
also recover the representatives for the  bars in $\Pers(\filt)$
from a reduced decomposition as in \Cref{eqn:decomp-all}.
They construct a \emph{prism} complex $\bar{K}$ with a real-valued function $\bar{f}$, whose
levelset zigzag~\cite{carlsson2009zigzag-realvalue}, $\LPers(\bar{f})$,
is isomorphic to the 
zigzag persistence of $\filt$, $\Pers(\filt)=\LPers(\bar{f})$.
They turn to the \emph{Mayer-Vetoris pyramid}~\cite{carlsson2009zigzag-realvalue} of function $\bar{f}$
and define \emph{apex representatives}, from which the representatives of $\Pers(\filt)$ can be efficiently retrieved
(see \Cref{fig:lifted-cycle} for an example).
Their main result is summarized in the following theorem.
\begin{theorem}[\hspace{-0.01pt}\cite{DBLP:conf/compgeom/Dey0M25}]
\label{thm:extr-zzrep}
Given a reduced decomposition $R = DV$ for $\filtnz$,
for each $\dimx$-dimensional bar $[b,d]\in\Pers(\filt)$ and its corresponding persistence pair
$(\ssx,\tsx)$ in $\Pers(\filtnz)$,
one can retrieve the apex representative $\GG$ for the corresponding bar
in $\LPers(\bar{f})$ from $R[\tsx]$ or $V[\tsx]$
in $O(\dimx\cdot\fcnt\log\fcnt)$ time.
Moreover, the zigzag representative for $[b,d]\in\Pers(\filt)$ can be retrieved
from $\GG$ in $O(\log\fcnt+C)$ time, where $C$ is the size of the output.
\end{theorem}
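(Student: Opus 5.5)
This result is cited from~\cite{DBLP:conf/compgeom/Dey0M25}, so the plan is to reconstruct the argument of that paper from the objects reviewed above. First, we fix the prism $\bar{K}$ and its function $\bar{f}$. Each cell of $\filtnz$ is either an original cell $\ussx$ with zigzag lifetime $[s,t]$, or a cone $\oG\cdot\ussx$. In $\bar{K}$, a cell $\ussx$ of $\dfilt$ with lifetime $[s,t]$ becomes the prism $\ussx\times[s,t]$, and $\bar{f}$ is projection to the interval coordinate. The levelset at $i$ is then a copy of $K_i$, so $\Pers(\filt)=\LPers(\bar{f})$. The second step is to identify the total complex of $\filtnz$, namely the cone $\oG\ast K$ of the union complex, with a quotient of $\bar{K}$. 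Cells inserted in the first half correspond to prisms attached at their left ends. Cones correspond to prisms attached at their right ends, via the relative homology $\Hm(\bar{K},\bar{K}_{\ge\cdot})$. This realizes $\filtnz$ as the sequence of pyramid terms along one path of the Mayer--Vietoris pyramid, and the bijection of~\cite{DBLP:conf/esa/DeyH22} becomes the pyramid's bijection between bars.

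Next comes the lifting step, which produces the apex representative. For a pair $(\ssx,\tsx)$, the column $\col{\Rmat}{\tsx}$ (for a finite pair) or $\col{\Vmat}{\tsx}$ (for an essential/relative pair) is a chain $c$ in $\filtnz$ whose cells have known lifetimes $[s_j,t_j]$. We lift $c$ to a relative cycle $\GG$ in $\bar{K}$, the apex of the pyramid, by replacing each cell $\ussx$ by its prism restricted to the bar's range $[\birth,\death]$. Each cone $\oG\cdot\ussx$ is replaced by the corresponding capped prism. We then verify that $\partial\GG$ lies in the appropriate relative part, and that $\GG$ represents the apex class of the diamond for $[\birth,\death]$. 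The verification uses pairing with $\ssx$ and the reducedness of $\Rmat$: the pivot of the column guarantees that the class is nonzero and sits in the correct diamond.

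For the complexity, only the endpoints of each prism need to be stored, and each $\dimx$-cell in the column has $O(\dimx)$ faces. Sorting the breakpoints (lifetime endpoints) of the column's cells takes $O(\dimx\cdot\fcnt\log\fcnt)$. The zigzag representative is the slice $\GG\cap\bar{f}^{-1}(i)$ for each $i\in[\birth,\death]$. These slices are generated by sweeping the sorted breakpoints, with an $O(\log\fcnt)$ search to locate $\birth$, so the cost is $O(\log\fcnt+C)$. Compatibility of the slices with the maps follows from their being slices of one chain in the prism.

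I expect the main obstacle to be the claim that slices of the apex representative really give compatible representatives for the interval summand, i.e., that they diagonalize the module rather than merely being nonzero. The approach I would try first is to invoke the Mayer--Vietoris pyramid isomorphisms of~\cite{carlsson2009zigzag-realvalue}. These carry the apex class down the diagonal to the levelset zigzag, and each boundary map in the pyramid acts on $\GG$ as restriction to a level. One must check that this restriction commutes with the interval decomposition, using the uniqueness of the pairing from the reduced decomposition.
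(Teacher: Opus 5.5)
The paper gives no proof of this statement; it is quoted from \cite{DBLP:conf/compgeom/Dey0M25}, so your sketch has to stand on its own. Its global architecture is the right one: prism $\bar K$, levelset zigzag, Mayer--Vietoris pyramid, lift a column of $R$ or $V$, then slice. But the lifting step, which carries all the content, does not work as stated.

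Suppose each cell $\sigma$ of the column $c$ is replaced by $\sigma\times([s_\sigma,t_\sigma]\cap[b,d])$. For a face $\rho$ shared by several cells of $c$, the terms $\rho\times([s_\sigma,t_\sigma]\cap[b,d])$ in $\partial\Gamma$ do not cancel, because the cofaces have different lifetimes. Also, $\sigma\times\{s_\sigma\}$ or $\sigma\times\{t_\sigma\}$ lies strictly inside the slab whenever $\sigma$ is inserted or deleted inside $(b,d)$. So $\partial\Gamma$ lies in no relative part, and the verification you plan cannot succeed.

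The slices are wrong too. The slice at level $i$ is just $c\cap K_i$, which in general is not a cycle of $K_i$: cells of $R[\tau]$ may have been deleted before $b$, or never be alive in $[b,d]$. For $c=V[\tau]$ the slice even has the wrong dimension.

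What is needed is a lift that keeps each base cell of $c$ horizontal, at a single height inside its lifetime (cones can become prism pieces). For every codimension-one face, it must then sort the heights of that face's cofaces in $c$ and join them in pairs by vertical pieces of the face's prism. The slices then consist of these vertical pieces, one dimension below $\Gamma$, so a $q$-dimensional bar needs a $(q+1)$-dimensional column. The per-face sorting over the $O(q\cdot m)$ face incidences is where the $O(q\cdot m\log m)$ bound comes from. In your construction there is nothing to sort, so your complexity count is not attached to any actual step.

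Note also that the last complex of $\filtnz$ is a cone, so every pair other than the one containing $\oG$ is finite. Hence the choice between $R[\tau]$ and $V[\tau]$ cannot be made by ``finite versus essential''.

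The correctness part is only asserted. The apex is not $\bar K$ itself but a specific node of the pyramid, an interlevel or a relative group depending on $[b,d]$ and the types of its endpoints. You would have to show three things:
\begin{itemize}
\item the lifted chain is a relative cycle at that node;
\item its class generates the corresponding block summand;
\item slicing sends the apex classes of all bars simultaneously to compatible bases of the levelset zigzag.
\end{itemize}
``Uniqueness of the pairing'' cannot supply this: the pairing is unique, but neither the interval decomposition nor the reduced $R=DV$ is. What must be shown is that the decomposition read off from the given reduced $R=DV$ along the ordinary-persistence path propagates through the diamonds of the pyramid. You correctly flag this as the main obstacle but give no mechanism for it, so as it stands the proposal does not prove the theorem.
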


\section{Easy updates}
\label{sec:ez-up}
In this section
we present the update for six of the operations
in \Cref{sec:update-oper}.
The update algorithms for outward contraction and expansion
involving the adjacency change are presented in \Cref{sec:out-contra,sec:out-exp} respectively.
For all the updates, we assume that we are given a reduced decomposition as in \Cref{eqn:decomp-all}
for the non-zigzag filtration 
that corresponds to the zigzag filtration 
before the operation,
and the goal is to obtain a reduced decomposition
for the non-zigzag filtration 
after the operation.
Representatives for the zigzag filtrations
can then be extracted 
using \Cref{thm:extr-zzrep}.

\para{Outward/inward switch.}
Since the switch in \Cref{eqn:out-in-sw}
swaps two inclusions of different directions,
the corresponding non-zigzag filtrations before and after
the switch are the same.
So no update to the $R=DV$ decomposition is necessary.

\para{Forward/backward switch.}
Corresponding to a forward switch in \Cref{eqn:fwd-sw} on the original zigzag filtration,
there is a forward switch 
in the first half of the non-zigzag filtration consisting of the base cells.
Similarly,
corresponding to a backward switch in \Cref{eqn:bak-sw} on the zigzag filtration,
there is a forward switch 
in the second half of the non-zigzag filtration consisting of the cone cells.
To perform the update for a forward switch
on a non-zigzag filtration,
we utilize the $O(\fcnt)$ {transposition algorithm}
in~\cite{cohen2006vines}.

\para{Inward contraction/expansion.}
Let $\filtnz$ (resp.\ $\filtxnz$) be
the non-zigzag filtration corresponding to $\filt$ (resp.\ $\filtx$)
in \Cref{eqn:in-contrac-expan}.
Also,
let $\ssx_0$ be the $\DG$-cell in $\filtnz$
corresponding to the insertion $K_{i-1}\inctosp{\ssx}K_{i}$ in $\filt$
and let $\cdec{\ssx}_0$ be the cone of $\ssx_0$  corresponding to the deletion $K_{i}\bakinctosp{\ssx}K_{i+1}$
in $\filt$.
Since $\ssx$ is deleted immediately after it is inserted to $K_{i}$ in $\filt$,
we have that $\ssx_0$ has no cofaces in $\filtnz$.
This also implies that $\cdec{\ssx}_0$ has no cofaces in $\filtnz$.
To perform the update for an inward expansion, suppose that we
are given a reduced decomposition $R^{*}=D^{*}V^{*}$ for $\filtxnz$ before the expansion.
We first append columns representing the chains $\ssx_0$, $\cdec{\ssx}_0$ to $V^{*}$,
and columns representing $\partial\ssx_0$, $\partial\cdec{\ssx}_0$ to $R^{*}$.
Notice that we also need to append new rows corresponding to $\ssx_0$, $\cdec{\ssx}_0$
to $R^{*}$ and $V^{*}$.
Call the newly derived matrices $R$ and $V$.
We then perform two column reductions~\cite{edelsbrunner2000topological} 
to make $R$ reduced,
which takes $O(\fcnt^2)$ time.
After this,
we perform $O(\fcnt)$ row/column transpositions on $R$ and $V$ using the algorithm in~\cite{cohen2006vines}
to place the columns and rows of $\ssx_0$, $\cdec{\ssx}_0$
to the appropriate positions 
as in $\filtnz$. 
Thus, the update for inward expansion takes $O(\fcnt^2)$ time.

We now describe the update for an inward contraction.
Suppose that we
are given a reduced decomposition $R=DV$ for $\filtnz$.
We first perform $O(\fcnt)$ transpositions~\cite{cohen2006vines} so that the columns (resp.\ rows) corresponding to 
$\ssx_0$ and $\cdec{\ssx}_0$ are the last two columns (resp.\ rows) in $R$ and $V$.
Notice that we could do this because 
$\ssx_0$, $\cdec{\ssx}_0$  have no cofaces in $\filtnz$.
We then simply remove
the columns and rows of
$\ssx_0$ and $\cdec{\ssx}_0$ from $R$ and $V$.
Accordingly, the update for the inward contraction also takes $O(\fcnt^2)$ time.
\begin{remark}
The fact that $\ssx_0$ and $\cdec{\ssx}_0$ have no cofaces in the converted non-zigzag
filtration is a key condition 
that allows existing algorithms to be applied.
The outward contraction and expansion, however, do not have such a property, which makes them more difficult.
In fact, there is certain non-trivial change on the cell adjacency
over the two operations, which is 
detailed in \Cref{sec:bound-chg}.
\end{remark}

\section{Outward contraction}\label{sec:out-contra}

We present the update algorithm for an outward contraction.
Recall that an outward contraction is the following operation:
\begin{equation}
\label{eqn:out-contrac-in-sec}
\begin{tikzpicture}[baseline=(current  bounding  box.center)]
\node (a) at (0,0) {$\filt: \cplx_0 \leftrightarrow
\cdots\leftrightarrow \cplx_{i-2}
\leftrightarrow 
\cplx_{i-1}\bakinctosp{\sG} 
\cplx_i 
\inctosp{\sG} \cplx_{i+1}
\leftrightarrow \cplx_{i+2}\leftrightarrow
\cdots \leftrightarrow \cplx_\fcnt$}; 
\node (b) at 
(0,-0.95){$\filtx: \cplx_0 \leftrightarrow
\cdots\leftrightarrow \cplx_{i-2}
\leftrightarrow 
\cplxx_{i}
\leftrightarrow \cplx_{i+2}\leftrightarrow
\cdots \leftrightarrow \cplx_\fcnt$};
\path[->] ([xshift=7pt]a.south) edge[double distance=1.5pt,arrows=-{Classical TikZ Rightarrow[length=3pt]}]  ([xshift=7pt]b.north);
\end{tikzpicture}
\end{equation}
where $\cplxx_{i}=\cplx_{i-1}=\cplx_{i+1}$
and $\ssx$ is a $p$-simplex.

Let $\filtnz$ and $\filtxnz$, respectively, be
the non-zigzag filtrations corresponding to $\filt$ and $\filtx$
by the conversion in \Cref{sec:conversion}.
Rather than consider a decomposition over all degrees as in \Cref{eqn:decomp-all}, 
we work on a separate decomposition for each degree.
Specifically,
for each $\dimx$,
let $\Dmat_\dimx$ (resp.\ $\Dmatx_\dimx$) 
be the $\dimx$-th filtered boundary matrix for $\filtnz$ (resp.\ $\filtxnz$)
which encodes the boundary operator $\partial_\dimx:\Chn_\dimx\to\Chn_{\dimx-1}$.
We assume that we are given a reduced decomposition
\begin{equation*}
\Rmat_\dimx=\Dmat_\dimx\Vmat_\dimx
\end{equation*}
for each $\dimx$,
where $\Rmat_\dimx$ has the same dimension as $\Dmat_\dimx$
(i.e., rows corresponding
to $(\dimx-1)$-cells and columns corresponding to $\dimx$-cells),
and $\Vmat_\dimx$ is a square matrix with both rows and columns corresponding to
$\dimx$-cells.
The goal of the update is to compute the decompositions for $\filtxnz$.

\subsection{Changes on the boundary matrices}\label{sec:bound-chg}
To present our update strategy,
we first describe the changes from 
the boundary matrix $\Dmat_{\dimx}$ to $\Dmatx_{\dimx}$
for the different $\dimx$.
In a zigzag filtration, each insertion of a simplex $\xssx$ is associated with a deletion
of $\xssx$, which 
determines the time interval of a specific occurrence of $\xssx$.
Also notice that a coface of $\xssx$ must have its time intervals contained
in those of $\xssx$.
In \Cref{eqn:filt-d}, we illustrate the corresponding insertion and deletion 
for the two arrows $\cplx_{i-1}\bakinctosp{\ssx}\cplx_i\inctosp{\ssx}\cplx_{i+1}$:
\begin{equation}
\label{eqn:filt-d}
\opt{arXiv}{\scalebox{1.1}}{\begin{tikzpicture}[baseline=(current  bounding  box.center)]
{\opt{arXiv}{\small }\node (a) at (0,0) {$\filt: 
\cdots
\inctosp{\ssx}
\cdots
\inctosp{\tsx} 
\cdots
\bakinctosp{\tsx}
\cdots
\leftrightarrow 
\cplx_{i-1}\bakinctosp{\ssx} 
\cplx_i 
\inctosp{\ssx} \cplx_{i+1}
\leftrightarrow 
\cdots
\inctosp{\tsx'} 
\cdots
\bakinctosp{\tsx'}
\cdots
\bakinctosp{\ssx}
\cdots$}; 
\node (b) at (0,-1.5){$\dfilt:\cdots
\inctosp{\ssx_1}
\cdots
\inctosp{\tsx_0} 
\cdots
\bakinctosp{\tsx_0}
\cdots
\leftrightarrow  
\dcplx_{i-1}\bakinctosp{\ssx_1} 
\dcplx_i
\inctosp{\ssx_2}\dcplx_{i+1}
\leftrightarrow 
\cdots
\inctosp{\tsx'_0} 
\cdots
\bakinctosp{\tsx'_0}
\cdots
\bakinctosp{\ssx_2}
\cdots$};}
\path[->] ([xshift=7pt,yshift=-1pt]a.south) edge[double distance=1.5pt,arrows=-{Classical TikZ Rightarrow[length=3pt]}]  ([xshift=7pt,yshift=9pt]b.north);
\draw (-5.17,0.55) -- (-0.4,0.55); %
\draw (-5.17,0.55) -- (-5.17,0.25); %
\draw (-0.4,0.55) -- (-0.4,0.25); %
\draw (-4.05,0.4) -- (-2.85,0.4); %
\draw (-4.05,0.4) -- (-4.05,0.25); %
\draw (-2.85,0.4) -- (-2.85,0.25); %
\draw (5.75,0.55) -- (0.75,0.55); %
\draw (0.75,0.55) -- (0.75,0.25); %
\draw (5.75,0.55) -- (5.75,0.25); %
\draw (4.5,0.4) -- (3.25,0.4); %
\draw (3.25,0.4) -- (3.25,0.25); %
\draw (4.5,0.4) -- (4.5,0.25); %
\draw (-5.6,-0.95) -- (-0.4,-0.95); %
\draw (-5.6,-0.95) -- (-5.6,-1.25); %
\draw (-0.4,-0.95) -- (-0.4,-1.25); %
\draw (-4.3,-1.1) -- (-3,-1.1); %
\draw (-4.3,-1.1) -- (-4.3,-1.25); %
\draw (-3,-1.1) -- (-3,-1.25); %
\draw (0.9,-0.95) -- (6.05,-0.95); %
\draw (0.9,-0.95) -- (0.9,-1.25); %
\draw (6.05,-0.95) -- (6.05,-1.25); %
\draw (3.5,-1.1) -- (4.75,-1.1); %
\draw (3.5,-1.1) -- (3.5,-1.25); %
\draw (4.75,-1.1) -- (4.75,-1.25); %
\end{tikzpicture}}
\end{equation}
We suppose there are $(\Dim+1)$-cofaces $\tsx$, $\tsx'$ of $\ssx$ 
whose time intervals are contained in the two intervals of $\ssx$
respectively.
In \Cref{eqn:filt-d},
we also illustrate the filtration $\dfilt$ consisting of the $\DG$-complexes as in \Cref{sec:conversion},
where $\ssx_1$, $\ssx_2$ are $\DG$-cells
corresponding to
the two occurrences of $\ssx$ in $\filt$ respectively,
and  $\tsx_0$, $\tsx'_0$ are $\DG$-cells corresponding to
$\tsx$, $\tsx'$ respectively.
By the conversion in \Cref{sec:conversion}, $\ssx_1$ is a $\Dim$-face of $\tsx_0$
and $\ssx_2$ is a $\Dim$-face of $\tsx'_0$
in $\dfilt$.
Notice that the adjacency is also carried over to the non-zigzag filtration $\filtnz$.

\Cref{eqn:filtx-d} illustrates the adjacency of the  simplices/cells
in $\filtx$ and $\dfiltx$ after the contraction:
\begin{equation}
\label{eqn:filtx-d}
\opt{arXiv}{\scalebox{1.1}}{\begin{tikzpicture}[baseline=(current  bounding  box.center)]
{\opt{arXiv}{\small }\node (a) at (0,0) {$\filtx: 
\cdots
\inctosp{\ssx}
\cdots
\inctosp{\tsx} 
\cdots
\bakinctosp{\tsx}
\cdots
\leftrightarrow 
\cplxx_i 
\leftrightarrow 
\cdots
\inctosp{\tsx'} 
\cdots
\bakinctosp{\tsx'}
\cdots
\bakinctosp{\ssx}
\cdots$}; 
\node (b) at (0,-1.5){$\dfiltx:\cdots
\inctosp{\ssx_0}
\cdots
\inctosp{\tsx_0} 
\cdots
\bakinctosp{\tsx_0}
\cdots
\leftrightarrow  
\dcplxx
\leftrightarrow 
\cdots
\inctosp{\tsx'_0} 
\cdots
\bakinctosp{\tsx'_0}
\cdots
\bakinctosp{\ssx_0}
\cdots$};}
\path[->] ([xshift=7pt,yshift=-1pt]a.south) edge[double distance=1.5pt,arrows=-{Classical TikZ Rightarrow[length=3pt]}]  ([xshift=7pt,yshift=9pt]b.north);
\draw (-3.6,0.55) -- (4.35,0.55); %
\draw (-3.6,0.55) -- (-3.6,0.25); %
\draw (4.35,0.55) -- (4.35,0.25); %
\draw (-2.45,0.4) -- (-1.3,0.4); %
\draw (-2.45,0.4) -- (-2.45,0.25); %
\draw (-1.3,0.4) -- (-1.3,0.25); %
\draw (3.1,0.4) -- (1.85,0.4); %
\draw (1.85,0.4) -- (1.85,0.25); %
\draw (3.1,0.4) -- (3.1,0.25); %
\draw (-3.85,-0.95) -- (4.45,-0.95); %
\draw (-3.85,-0.95) -- (-3.85,-1.25); %
\draw (4.45,-0.95) -- (4.45,-1.25); %
\draw (-2.55,-1.1) -- (-1.3,-1.1); %
\draw (-2.55,-1.1) -- (-2.55,-1.25); %
\draw (-1.3,-1.1) -- (-1.3,-1.25); %
\draw (1.95,-1.1) -- (3.2,-1.1); %
\draw (1.95,-1.1) -- (1.95,-1.25); %
\draw (3.2,-1.1) -- (3.2,-1.25); %
\end{tikzpicture}}
\end{equation}
The insertion of $\ssx$ before $\cplx_{i-1}\bakinctosp{\ssx}\cplx_i\inctosp{\ssx}\cplx_{i+1}$
in $\filt$ and the deletion of $\ssx$ after, now become the endpoints of the same time interval in $\filtx$.
As a result, the $\DG$-cell $\ssx_0$ corresponding to $\ssx$ now becomes a $\Dim$-face
of $\tsx_0$, $\tsx'_0$ in $\dfiltx$ as well as in $\filtxnz$.
The above change
applies to any $(\Dim+1)$-cofaces of $\ssx_1$, $\ssx_2$ in $\dfilt$ and $\filtnz$.
In other words, the two $\Dim$-cells $\ssx_1$, $\ssx_2$ in $\filtnz$ are identified
as the same $\Dim$-cell $\ssx_0$ in $\filtxnz$.
Since cell insertions in the first half of $\filtxnz$ follow the same order as
cell insertions in $\dfiltx$,
$\ssx_0$ in $\filtxnz$ corresponds
to $\ssx_1$ in $\filtnz$.
The change in the second halves of
$\filtnz$ and $\filtxnz$
which involve the cone cells is similar. However, since the insertion order
of cone cells in  $\filtnz$ (resp.\ $\filtxnz$) is the reverse of the deletion order of
cells in $\dfilt$ (resp.\ $\dfiltx$), 
the cone $\cdec{\ssx}_0$ of ${\ssx}_0$ in $\filtxnz$
now corresponds
to the cone $\cdec{\ssx}_2$ of ${\ssx}_2$ in $\filtnz$.

\begin{figure}[!tb]
  \centering
  \opt{SoCG}{\includegraphics[width=\linewidth]{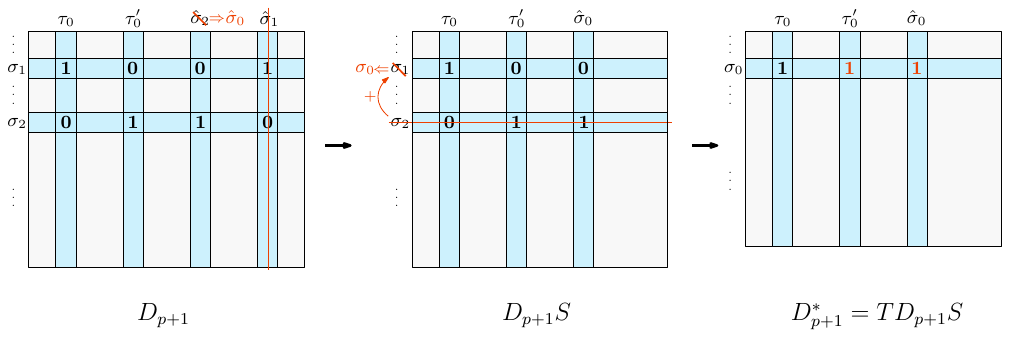}}
  \opt{arXiv}{\includegraphics[width=0.9\linewidth]{fig/mat_adj_chg}}
  \caption{Change from boundary matrix $\Dmat_{\Dim+1}$ to $\Dmatx_{\Dim+1}$ (see \Cref{eqn:dmat-mult}
  for definitions of $\Tmat$ and $\Smat$).}
  \label{fig:mat_adj_chg}
\end{figure}

We summarize the change as follows:
\begin{observation}\label{obsv:contra-adj-change}
From $\filtnz$ to $\filtxnz$,
the $\Dim$-cell $\ssx_2$ and the $(\Dim+1)$-cell $\cdec{\ssx}_1$
are removed,
the $\Dim$-cell $\ssx_1$ and the $(\Dim+1)$-cell $\cdec{\ssx}_2$ are renamed as $\ssx_0$ and $\cdec{\ssx}_0$ respectively,
and $(\Dim+1)$-cofaces of $\ssx_1$, $\ssx_2$
{\rm(}resp.\ $(\Dim+2)$-cofaces of $\cdec{\ssx}_1$, $\cdec{\ssx}_2${\rm)}
now have $\ssx_0$ {\rm(}resp.\ $\cdec{\ssx}_0${\rm)} as face.
Accordingly, the change from the boundary matrix
$\Dmat_{\dimx}$ to $\Dmatx_{\dimx}$ for
the different $\dimx$
can be described as follows: 

\begin{itemize}
    \item $\Dmat_{\Dim+1}$ to $\Dmatx_{\Dim+1}${\rm:} 
    the column of $\cdec{\ssx}_1$ is removed,
and column label $\cdec{\ssx}_2$ is changed to $\cdec{\ssx}_0$;
    the row of $\ssx_2$ is added to
the row of $\ssx_1$, with 
the row of $\ssx_2$ removed, and
the row label $\ssx_1$ changed to $\ssx_0$;
see \Cref{fig:mat_adj_chg}.
    \item $\Dmat_{\Dim}$ to $\Dmatx_{\Dim}${\rm:} 
    the column of $\ssx_2$ is removed, and the column label $\ssx_1$ is changed to $\ssx_0$.
    \item $\Dmat_{\Dim+2}$ to $\Dmatx_{\Dim+2}${\rm:} 
    the row of $\cdec{\ssx}_1$ is added to
the row of $\cdec{\ssx}_2$, with
the row of $\cdec{\ssx}_1$ removed,
and the row label $\cdec{\ssx}_2$ changed to $\cdec{\ssx}_0$.
\end{itemize}

\noindent
We also have $\Dmat_{\dimx}=\Dmatx_{\dimx}$ for any other $\dimx$.
\end{observation}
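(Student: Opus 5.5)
This observation is essentially a bookkeeping statement about the conversion of \Cref{sec:conversion}, so I would prove it by directly comparing the cell sets and face relations of $\filtnz$ and $\filtxnz$, using \Cref{eqn:filt-d,eqn:filtx-d}.

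\textbf{Step 1: cells.} First I match the cells. Every occurrence of a simplex in $\filt$ other than the two occurrences of $\ssx$ around $\cplx_i$ keeps its insertion and deletion endpoints in $\filtx$. Hence it gives the same $\DG$-cell, and the same cone, in both non-zigzag filtrations, in the same relative order.

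The two occurrences of $\ssx$ have intervals ending at $\cplx_{i-1}\bakinctosp{\ssx}\cplx_i$ and starting at $\cplx_i\inctosp{\ssx}\cplx_{i+1}$. In $\filtx$ they merge into one occurrence $\ssx_0$, whose insertion is the insertion of $\ssx_1$ and whose deletion is the deletion of $\ssx_2$. Since the first half of $\filtnz$ is ordered by insertion, $\ssx_0$ takes the slot of $\ssx_1$ and $\ssx_2$ disappears. Since the second half is ordered by reversed deletion and the cone of a cell records its deletion, $\cdec{\ssx}_0$ takes the slot of $\cdec{\ssx}_2$ and $\cdec{\ssx}_1$ disappears. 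Dimensions are $\Dim$ for $\ssx_j$ and $\Dim+1$ for $\cdec{\ssx}_j$.

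\textbf{Step 2: face relations.} Next I check the face relations. In $\dfilt$, a $\DG$-cell $\tsx_0$ that is a $(\Dim+1)$-coface of $\ssx$ must lie within one of the two intervals of $\ssx$. It therefore has exactly one of $\ssx_1$, $\ssx_2$ as a face, and has it with multiplicity given by the simplicial boundary. In $\dfiltx$ the same $\tsx_0$ has $\ssx_0$ in that place.

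Coning carries this over verbatim: $\partial(\oG\cdot\kappa)=\kappa-\oG\cdot\partial\kappa$ on the relevant part. So the cone of such a $\tsx_0$ has $\cdec{\ssx}_1$ or $\cdec{\ssx}_2$ as a face, which becomes $\cdec{\ssx}_0$. No other face relations change, since all other cells and their boundaries are literally identical.

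\textbf{Step 3: matrix effects.} Finally I translate these changes into matrices.
\begin{itemize}
\item In $\Dmat_{\Dim+1}$, each column had its entry in row $\ssx_1$ or row $\ssx_2$ but never both. Replacing both rows by one row $\ssx_0$ equals their sum, which here is the same as adding row $\ssx_2$ to row $\ssx_1$ with no cancellation. The column $\cdec{\ssx}_1$ is deleted, and the column $\cdec{\ssx}_2$ is relabeled $\cdec{\ssx}_0$. Its boundary $\ssx_2-\oG\cdot\partial\ssx_2$ has the row entry at $\ssx_2$, which moves to $\ssx_0$, matching the merged row.
\item In $\Dmat_{\Dim}$, only the columns of $\ssx_1$ and $\ssx_2$ are affected. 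Both have the boundary of $\ssx$ expressed in the same lower cells, since the faces of $\ssx$ persist over both intervals. Deleting column $\ssx_2$ and relabeling $\ssx_1$ as $\ssx_0$ therefore suffices.
\item For $\Dmat_{\Dim+2}$, rows $\cdec{\ssx}_1$ and $\cdec{\ssx}_2$ merge into row $\cdec{\ssx}_0$, placed at the position of $\cdec{\ssx}_2$.
\item All other $\Dmat_\dimx$ are untouched.
\end{itemize}

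\textbf{Main obstacle.} The delicate point is the $\Dmat_{\Dim}$ claim. It asserts that the faces of $\ssx_1$ and of $\ssx_2$ are the \emph{same} $\DG$-cells. I would justify this as follows: the $(\Dim-1)$-faces of $\ssx$ are present throughout $\cplx_{i-1}$, $\cplx_i$ and $\cplx_{i+1}$, and are not inserted or deleted at these two arrows. Hence each such face is a single occurrence whose interval contains both intervals of $\ssx$.

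A second delicate point is the ordering claim for the cone cells. Here I would carefully use that deletion order is reversed in the second half of $\filtnz$.
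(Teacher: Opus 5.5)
Your proposal is correct and follows essentially the same route as the paper, whose justification is the discussion around \Cref{eqn:filt-d,eqn:filtx-d}: the two occurrences of $\ssx$ fuse into one interval, so $\ssx_0$ takes the insertion slot of $\ssx_1$ and $\cdec{\ssx}_0$ takes the slot of $\cdec{\ssx}_2$ (because cone order reverses deletion order), and every coface of $\ssx_1$ or $\ssx_2$ (and, by coning, of $\cdec{\ssx}_1$ or $\cdec{\ssx}_2$) now has the merged cell as a face. Your additional checks make explicit what the paper leaves implicit: that no column meets both merged rows, so the new row is a plain sum; that the column of $\cdec{\ssx}_2$ carries its $\ssx_2$ entry over to $\ssx_0$; and that $\partial\ssx_1=\partial\ssx_2$, which the paper only records later, in \Cref{sec:contra-up-other}.
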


\subsection{Update for degree $\Dim+1$}\label{sec:contra-up}

We describe in detail the update for degree $\Dim+1$, 
i.e., given the decomposition 
\[\Rmat_{\Dim+1}=\Dmat_{\Dim+1}\Vmat_{\Dim+1}\]
for $\filtnz$,
we find the corresponding decomposition for $\filtxnz$.
The updates for the degrees $\Dim$, $\Dim+2$ are similar but simpler; they are briefly described in \Cref{sec:contra-up-other}.

\begin{figure}[!tb]
  \centering
  \opt{SoCG}{\includegraphics[width=.7\linewidth]{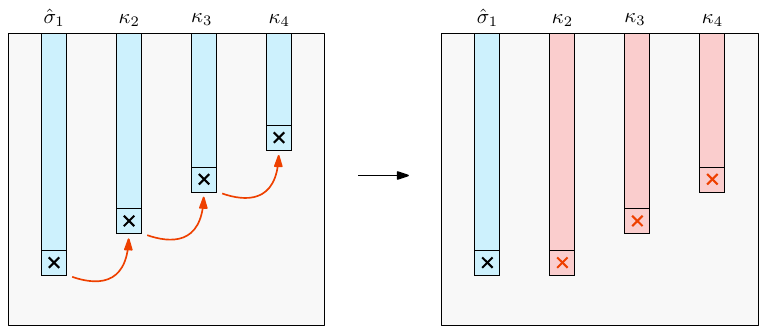}}
  \opt{arXiv}{\includegraphics[width=.6\linewidth]{fig/pareto}}
  \caption{An illustration of columns in $\Rmat_{\Dim+1}$ 
  whose corresponding columns in 
  $\Vmat_{\Dim+1}$ have non-zero values on $\cdec{\ssx}_1$
  in Step I.2, where $\ell=4$.
  Pivots of these columns in $\Rmat_{\Dim+1}$ are monotonically decreasing before the column summation
  in Step I.2, with the red arrows indicating the direction of the summation.
  Note that we perform the summation \emph{starting from the right-most two columns}
  so that after this, only pivot of the column of $\cdec{\ssx}_1$
  may clash with another one. The clash is fixed later on.}
  \label{fig:pareto}
\end{figure}

\para{Step I.}
We first perform column summations on $\Vmat_{\Dim+1}$, and correspondingly on
$\Rmat_{\Dim+1}$, to make the row $\row{\Vmat_{\Dim+1}}{\cdec{\ssx}_1}$
contain zero entries everywhere except on the diagonal.
The aim is to remove the 
reference to $\cdec{\ssx}_1$ in $\Vmat_{\Dim+1}$ 
to prepare
for the later removal of $\cdec{\ssx}_1$.
To ensure a valid decomposition, we only perform \emph{left-to-right} column
summations so that $\Vmat_{\Dim+1}$ is still an upper triangular matrix with 1's on the diagonal.
We also ensure $\Rmat_{\Dim+1}$ has distinct column pivots almost everywhere,
possibly with a single violation which will be fixed later.

There are two cases for the step. First, if $\col{\Rmat_{\Dim+1}}{\cdec{\ssx}_1}=0$, i.e.,
$\col{\Vmat_{\Dim+1}}{\cdec{\ssx}_1}$ is a cycle, then we only need to
add $\col{\Vmat_{\Dim+1}}{\cdec{\ssx}_1}$ to any other $\col{\Vmat_{\Dim+1}}{\xssx}$
with $\matel{\Vmat_{\Dim+1}}{\cdec{\ssx}_1}{\xssx}\neq 0$,
without needing to change $\col{\Rmat_{\Dim+1}}{\xssx}$.

Now suppose that $\col{\Rmat_{\Dim+1}}{\cdec{\ssx}_1}\neq 0$.
We have two separate sub-steps
(note that these two sub-steps bear some similarity to the SiRUP algorithm~\cite{giunti2023pruning}
that considers removing a simplex without cofaces from a non-zigzag filtration).

\begin{itemize}

    \item Step I.1:
        While
    there are two columns $\col{\Vmat_{\Dim+1}}{\xssx}$ and $\col{\Vmat_{\Dim+1}}{\xssxx}$,
    $\xssx<\xssxx$, with
    non-zero entries on $\cdec{\ssx}_1$ such that 
    \[\pivot(\col{\Rmat_{\Dim+1}}{\xssx})<\pivot(\col{\Rmat_{\Dim+1}}{\xssxx}),\]
    add $\col{\Vmat_{\Dim+1}}{\xssx}$ to $\col{\Vmat_{\Dim+1}}{\xssxx}$
    and $\col{\Rmat_{\Dim+1}}{\xssx}$ to $\col{\Rmat_{\Dim+1}}{\xssxx}$,
    so that $\matel{\Vmat_{\Dim+1}}{\cdec{\ssx}_1}{\xssxx}=0$
    and the pivot of $\col{\Rmat_{\Dim+1}}{\xssxx}$ stays the same.
    
    \item Step I.2:
    Let
    \[\col{\Vmat_{\Dim+1}}{\xssx_1},\ldots,\col{\Vmat_{\Dim+1}}{\xssx_\ell}\]
    be all columns in $\Vmat_{\Dim+1}$
    that still have non-zero entries on $\cdec{\ssx}_1$,
    where \[\xssx_1=\cdec{\ssx}_1<\xssx_2<\cdots<\xssx_\ell.\]
    We have \[{\pivot}(\col{\Rmat_{\Dim+1}}{\xssx_1})>\cdots>\pivot(\col{\Rmat_{\Dim+1}}{\xssx_\ell})\]
    because if $\pivot(\col{\Rmat_{\Dim+1}}{\xssx_j})<\pivot(\col{\Rmat_{\Dim+1}}{\xssx_{j+1}})$
    for a $j$,
    then $\matel{\Vmat_{\Dim+1}}{\cdec{\ssx}_1}{\xssx_{j+1}}$ would have been set 
    to zero in Step I.1.
    Also notice that $\xssx_1,\ldots,\xssx_\ell$ are at the \emph{Pareto front}
    of the partial order defined by the order 
    of column indices and the order of pivots in $\Rmat_{\Dim+1}$. 
    See \Cref{fig:pareto} for an example of $\ell=4$.
    
    We then do the following:
    \begin{itemize}
        \item For $j=\ell,\ell-1,\ldots,2$,
        add $\col{\Rmat_{\Dim+1}}{\xssx_{j-1}}$
        to $\col{\Rmat_{\Dim+1}}{\xssx_j}$ and
        $\col{\Vmat_{\Dim+1}}{\xssx_{j-1}}$ to
        $\col{\Vmat_{\Dim+1}}{\xssx_j}$.
    \end{itemize}
\end{itemize}

\begin{observation}\label{obsv:I-2-pivots}
After Step I.2,
the row
$\row{\Vmat_{\Dim+1}}{\cdec{\ssx}_1}$ has zero entries everywhere
except on the diagonal.
Moreover, for each $j$, the pivot of $\col{\Rmat_{\Dim+1}}{\xssx_j}$ 
after Step I.2 equals the pivot of
$\col{\Rmat_{\Dim+1}}{\xssx_{j-1}}$ before Step I.2.
Hence, columns of $\Rmat_{\Dim+1}$ have distinct pivots everywhere
except $\col{\Rmat_{\Dim+1}}{\cdec{\ssx}_{1}}$,
for which $\pivot(\col{\Rmat_{\Dim+1}}{\cdec{\ssx}_{1}})=\pivot(\col{\Rmat_{\Dim+1}}{\xssx_{2}})$.
\end{observation}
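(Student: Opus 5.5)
The plan is to track the row entries and the pivots through the right-to-left sequence of additions in Step~I.2. We process $j=\ell,\ell-1,\ldots,2$, so when $\col{\Rmat_{\Dim+1}}{\xssx_{j-1}}$ is added to $\col{\Rmat_{\Dim+1}}{\xssx_j}$, the column $\xssx_{j-1}$ has not yet been modified. Its contents are therefore those from before Step~I.2. Each column $\xssx_j$ is modified exactly once, by adding the original column $\xssx_{j-1}$. Write $\pi_j$ for the pivot of $\col{\Rmat_{\Dim+1}}{\xssx_j}$ before Step~I.2. By the Pareto-front property established just before the statement, $\pi_1>\pi_2>\cdots>\pi_\ell$.

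\emph{Row claim.} After Step~I.1, the columns of $\Vmat_{\Dim+1}$ with nonzero entry in row $\cdec{\ssx}_1$ are exactly $\xssx_1,\ldots,\xssx_\ell$. Over $\Zbb_2$, adding the unmodified column $\xssx_{j-1}$, which has a $1$ in that row, to $\xssx_j$ zeroes that entry. Columns outside this list are never touched, and $\xssx_1=\cdec{\ssx}_1$ is never modified. So the only remaining nonzero entry in the row is the diagonal one. The additions go left to right, since $\xssx_{j-1}<\xssx_j$. Hence $\Vmat_{\Dim+1}$ stays upper triangular with unit diagonal, and $\Rmat_{\Dim+1}=\Dmat_{\Dim+1}\Vmat_{\Dim+1}$ is preserved, because the same operation is applied to both matrices.

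\emph{Pivot claim.} After the addition, column $\xssx_j$ of $\Rmat_{\Dim+1}$ is the sum of two columns with pivots $\pi_j<\pi_{j-1}$. The lowest nonzero entry of the sum is therefore in row $\pi_{j-1}$, and no cancellation can occur there because the other summand is zero below $\pi_j$. Thus the new pivot of $\xssx_j$ is $\pi_{j-1}$ for $j\ge2$, while $\xssx_1$ keeps $\pi_1$.

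For distinctness, the pivot set of the modified columns after Step~I.2 is $\{\pi_1,\pi_1,\pi_2,\ldots,\pi_{\ell-1}\}$, where the two copies of $\pi_1$ come from $\xssx_1$ and $\xssx_2$. Before Step~I.2 this set was $\{\pi_1,\ldots,\pi_\ell\}$. We need all nonzero columns to have had distinct pivots before Step~I.2. This holds because the decomposition was reduced initially, and Step~I.1 (as well as the cycle case) keeps every pivot unchanged, since each addition there puts a lower-pivot column into a higher-pivot one. The pivots of untouched columns are therefore disjoint from $\{\pi_1,\ldots,\pi_\ell\}$. Since $\pi_\ell$ simply disappears, the only clash is $\pivot(\col{\Rmat_{\Dim+1}}{\cdec{\ssx}_1})=\pivot(\col{\Rmat_{\Dim+1}}{\xssx_2})=\pi_1$, which occurs when $\ell\ge2$.

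The main point requiring care is the ordering argument: because we process from the right, each summand is an unmodified original column. This is what makes the pivot shift exact and makes the row entries cancel. We also need to confirm that all $\pi_j$ are defined, i.e.\ that no $\col{\Rmat_{\Dim+1}}{\xssx_j}$ is zero. If some column $\xssx_j$ with $j\ge2$ were zero, it would have been cleared in Step~I.1, since $\cdec{\ssx}_1$ has a nonzero $\Rmat$ column; otherwise we treat the zero column as having pivot $-\infty$, which is consistent with the strictly decreasing order.
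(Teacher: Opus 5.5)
Your proof is correct and takes essentially the same approach as the paper, which gives no formal proof but relies on the same reasoning. Processing $j=\ell,\ldots,2$ from the right keeps each summand $\kappa_{j-1}$ an original column, and then the strictly decreasing Pareto-front pivots yield both the exact shift from $\pi_j$ to $\pi_{j-1}$ and the single clash at $\hat{\sigma}_1$. One small slip: Step~I.1 never adds anything into a zero column, since a zero column has the lowest pivot, so such columns are not cleared there but sit at the tail of the list; your $-\infty$ fallback still gives the claimed shift, though the order is only weakly decreasing when there are several trailing zero columns.
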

\begin{remark}
If $\ell=1$ in Step I.2, 
then no column summations are needed,
and columns of $\Rmat_{\Dim+1}$ have distinct pivots
after Step I.2.
\end{remark}

\para{Step II.}
Notice that the change from $\Dmat_{\Dim+1}$ to $\Dmatx_{\Dim+1}$
involves row and column operations that can be expressed as multiplication of 
$\Dmat_{\Dim+1}$ on the left and right by suitable matrices.
Specifically,
let $\scnt{\dimx}$, $\scntx{\dimx}$ be the number of $\dimx$-cells
in $\filtnz$, $\filtxnz$, respectively. We have
\begin{equation}\label{eqn:dmat-mult}
\Dmatx_{\Dim+1}=\Smat\Dmat_{\Dim+1}\Tmat,
\end{equation}
where $\Smat$ is an $\scntx{\Dim}\times\scnt{\Dim}$ matrix representing
the following row operations: 

\begin{itemize}
    \item 
    $\row{\Dmatx_{\Dim+1}}{\ssx_0}=\row{\Dmat_{\Dim+1}}{\ssx_1}+\row{\Dmat_{\Dim+1}}{\ssx_2}$;
    \item
    $\row{\Dmatx_{\Dim+1}}{\xssx}=\row{\Dmat_{\Dim+1}}{\xssx}$ for any $\xssx\neq\ssx_0$ in $\filtxnz$.
\end{itemize}
Similarly, $\Tmat$ is an $\scnt{\Dim+1}\times\scntx{\Dim+1}$ matrix
which removes the column of $\cdec{\ssx}_1$ from $\Dmat_{\Dim+1}$.
See \Cref{fig:mat_adj_chg}.

Now let 
\begin{equation}\label{eqn:rvmat-mult}
\Rmatx_{\Dim+1}=\Smat\Rmat_{\Dim+1}\Tmat\text{ and }\Vmatx_{\Dim+1}=\Tmat^\tran\Vmat_{\Dim+1}\Tmat,
\end{equation}
where $\Tmat^\tran$ removes the row of $\cdec{\ssx}_1$ from $\Vmat_{\Dim+1}$.
We have:
\begin{proposition}\label{prop:Rx-DxVx}
$\Rmatx_{\Dim+1}=\Dmatx_{\Dim+1}\Vmatx_{\Dim+1}$.
\end{proposition}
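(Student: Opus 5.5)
We need $\Smat\Rmat_{\Dim+1}\Tmat=\Smat\Dmat_{\Dim+1}\Tmat\,\Tmat^\tran\Vmat_{\Dim+1}\Tmat$. Since $\Rmat_{\Dim+1}=\Dmat_{\Dim+1}\Vmat_{\Dim+1}$ still holds after Step I (Step I only performs simultaneous column operations on $\Rmat_{\Dim+1}$ and $\Vmat_{\Dim+1}$), the left-hand side equals $\Smat\Dmat_{\Dim+1}\Vmat_{\Dim+1}\Tmat$. By \Cref{eqn:dmat-mult}, the right-hand side is $\Smat\Dmat_{\Dim+1}(\Tmat\Tmat^\tran)\Vmat_{\Dim+1}\Tmat$. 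So the plan is to show
\[
\Dmat_{\Dim+1}\,\Tmat\Tmat^\tran\,\Vmat_{\Dim+1}\Tmat=\Dmat_{\Dim+1}\,\Vmat_{\Dim+1}\Tmat,
\]
after which multiplying on the left by $\Smat$ finishes the proof.

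The key step is to identify $\Tmat\Tmat^\tran$. Because $\Tmat$ is the $\scnt{\Dim+1}\times\scntx{\Dim+1}$ matrix deleting the column of $\cdec{\ssx}_1$, $\Tmat\Tmat^\tran$ is the $\scnt{\Dim+1}\times\scnt{\Dim+1}$ identity with the diagonal entry at $\cdec{\ssx}_1$ set to zero, i.e., $I-e_{\cdec{\ssx}_1}e_{\cdec{\ssx}_1}^\tran$. Hence
\[
\Tmat\Tmat^\tran\Vmat_{\Dim+1}\Tmat=\Vmat_{\Dim+1}\Tmat-e_{\cdec{\ssx}_1}\,\bigl(\row{\Vmat_{\Dim+1}}{\cdec{\ssx}_1}\,\Tmat\bigr).
\]
Here $\row{\Vmat_{\Dim+1}}{\cdec{\ssx}_1}\Tmat$ is the row of $\cdec{\ssx}_1$ in $\Vmat_{\Dim+1}$ with the entry in column $\cdec{\ssx}_1$ removed. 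By \Cref{obsv:I-2-pivots} (and, in the case $\col{\Rmat_{\Dim+1}}{\cdec{\ssx}_1}=0$, by the direct elimination in Step I), after Step I this row is zero except on the diagonal, so the correction term vanishes. This gives the desired identity.

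The only real obstacle is making sure that Step I is what guarantees the vanishing. In both cases of Step I, every off-diagonal entry of $\row{\Vmat_{\Dim+1}}{\cdec{\ssx}_1}$ is cleared, and validity of $\Rmat=\Dmat\Vmat$ is preserved because each operation adds a column to a later one in both matrices simultaneously. We also need to check the dimensions: $\Smat$ is $\scntx{\Dim}\times\scnt{\Dim}$, $\Dmat_{\Dim+1}$ is $\scnt{\Dim}\times\scnt{\Dim+1}$, $\Tmat$ is $\scnt{\Dim+1}\times\scntx{\Dim+1}$, and $\Vmat_{\Dim+1}$ is square of size $\scnt{\Dim+1}$. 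These are consistent, so $\Rmatx_{\Dim+1}=\Dmatx_{\Dim+1}\Vmatx_{\Dim+1}$.
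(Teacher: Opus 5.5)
Your proposal is correct and follows essentially the same route as the paper. Both reduce the claim to $\Rmat_{\Dim+1}\Tmat=\Dmat_{\Dim+1}\Tmat\Tmat^\tran\Vmat_{\Dim+1}\Tmat$, identify $\Tmat\Tmat^\tran$ as the identity with the $\cdec{\ssx}_1$ diagonal entry zeroed, and use the fact that after Step I the row $\row{\Vmat_{\Dim+1}}{\cdec{\ssx}_1}$ vanishes off the diagonal, so the only discrepancy sits in the column that $\Tmat$ deletes. Your rank-one form $I-e_{\cdec{\ssx}_1}e_{\cdec{\ssx}_1}^\tran$ simply makes explicit the paper's remark that $\Vmat_{\Dim+1}$ and $I^\circ\Vmat_{\Dim+1}$ differ only in the column of $\cdec{\ssx}_1$.
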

\begin{proof}
By \Cref{eqn:dmat-mult,eqn:rvmat-mult},
we need to show     
\[\Smat\Rmat_{\Dim+1}\Tmat=\Smat\Dmat_{\Dim+1}\Tmat\Tmat^\tran\Vmat_{\Dim+1}\Tmat,\]
which is to show
\begin{equation}\label{eqn:RT-DTTVT}
\Rmat_{\Dim+1}\Tmat=\Dmat_{\Dim+1}\Tmat\Tmat^\tran\Vmat_{\Dim+1}\Tmat.
\end{equation}
Let $I^\circ=\Tmat\Tmat^\tran$. We have that the only difference between $I^\circ$
and an identity matrix is that the diagonal entry in $I^\circ$ corresponding to $\cdec{\ssx}_1$
is zero; see \Cref{fig:RT-DTTVT}.
Multiplying by $I^\circ$ on the left of $\Vmat_{\Dim+1}$ 
then sets the row of $\cdec{\ssx}_1$ to zero.
However, since only the diagonal entry in the row 
$\row{\Vmat_{\Dim+1}}{\cdec{\ssx}_1}$ is non-zero,
$\Vmat_{\Dim+1}$ and $I^\circ\Vmat_{\Dim+1}$ differ only in 
 the columns corresponding to $\cdec{\ssx}_1$.
Since $\Rmat_{\Dim+1}=\Dmat_{\Dim+1}\Vmat_{\Dim+1}$,
we also have that $\Rmat_{\Dim+1}$ and $\Dmat_{\Dim+1}I^\circ\Vmat_{\Dim+1}$ 
differ only in the columns corresponding to $\cdec{\ssx}_1$.
Since multiplying by $\Tmat$ on the right removes the column 
corresponding to $\cdec{\ssx}_1$, we then have \Cref{eqn:RT-DTTVT}.
\end{proof}

\Cref{prop:Rx-DxVx} provides the initial decomposition for $\filtxnz$ that we work on.
Notice that while $\Vmatx_{\Dim+1}$ is still upper triangular with all 1's on the diagonal,
$\Rmatx_{\Dim+1}$ may not necessarily have distinct column pivots.
By \Cref{obsv:I-2-pivots} and the fact that the only difference 
between $\Rmat_{\Dim+1}\Tmat$ and $\Rmat_{\Dim+1}$ is the removal of the column 
corresponding to $\cdec{\ssx}_1$, $\Rmat_{\Dim+1}\Tmat$ has distinct
column pivots. However, multiplying $\Smat$ with $\Rmat_{\Dim+1}\Tmat$, that is, adding the row of $\ssx_2$ to the row of $\ssx_1$ and 
removing the row of $\ssx_2$, may change the pivot of a certain column. Specifically, we have that 
the pivot of a column changes from $\Rmat_{\Dim+1}\Tmat$ to $\Rmatx_{\Dim+1}$
if and only if the pivot of the column is $\ssx_2$. 
To see this, we notice the following: (1)~For a column in $\Rmat_{\Dim+1}\Tmat$ whose pivot is greater than $\ssx_2$,
pivot of the corresponding column in $\Rmatx_{\Dim+1}$ stays the same;
(2)~For a column in $\Rmat_{\Dim+1}\Tmat$ whose pivot is less than $\ssx_2$,
since the entry of $\ssx_2$ in the column is 0,
the corresponding column in $\Rmatx_{\Dim+1}$ stays the same.

Then, there are at most two columns in $\Rmatx_{\Dim+1}$ 
having the same pivot, and we apply the following \emph{cascade sum} 
 on $\Rmatx_{\Dim+1}$ and $\Vmatx_{\Dim+1}$ to make the column pivots in
$\Rmatx_{\Dim+1}$ distinct:

\begin{itemize}
    \item
    While
    there are two columns $\col{\Rmatx_{\Dim+1}}{\xssx}$ and $\col{\Rmatx_{\Dim+1}}{\xssxx}$,
    $\xssx<\xssxx$, with the same pivot,
    add $\col{\Rmatx_{\Dim+1}}{\xssx}$ to $\col{\Rmatx_{\Dim+1}}{\xssxx}$
    and $\col{\Vmatx_{\Dim+1}}{\xssx}$ to $\col{\Vmatx_{\Dim+1}}{\xssxx}$.
\end{itemize}

Notice that each time we change the pivot of a column of $\Rmatx_{\Dim+1}$ in the above loop, its 
pivot row index becomes smaller, and again \emph{no more than two columns in $\Rmatx_{\Dim+1}$
have the same pivot}. Hence, the above loop runs for no more than $O(\fcnt)$ iterations.
See \Cref{fig:cascade} for an illustration.

\begin{figure}[!tb]
  \centering
  \opt{SoCG}{\includegraphics[width=\linewidth]{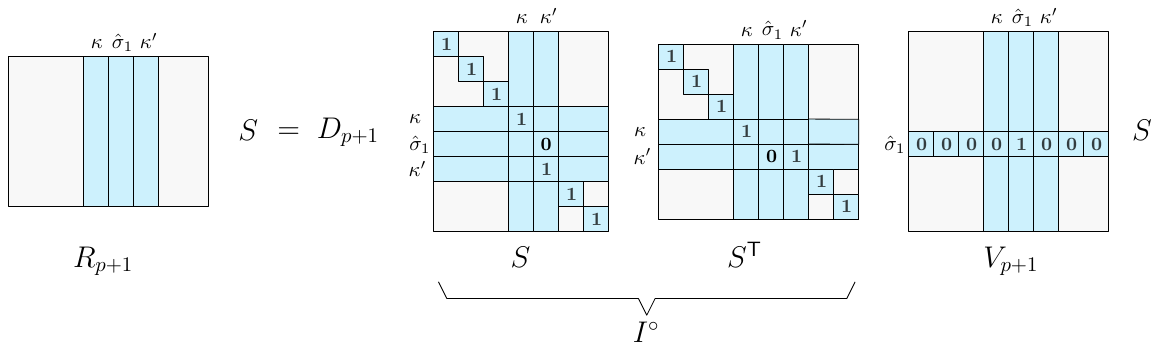}}
  \opt{arXiv}{\includegraphics[width=.9\linewidth]{fig/RT-DTTVT}}
  \caption{Illustration for the equation $\Rmat_{\Dim+1}\Tmat=\Dmat_{\Dim+1}\Tmat\Tmat^\tran\Vmat_{\Dim+1}\Tmat$.}
  \label{fig:RT-DTTVT}
\end{figure}

\begin{figure}[!tb]
  \centering
  \includegraphics[width=.8\linewidth]{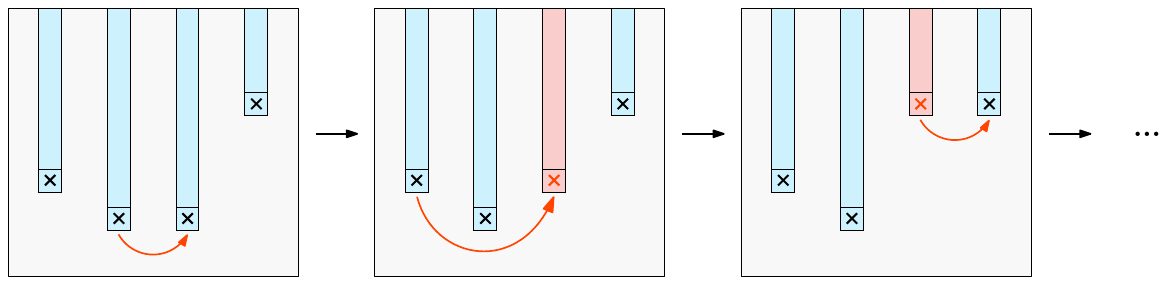} 
  \caption{Illustration of the cascade sum where pivot row index of the updated column always becomes smaller.}
  \label{fig:cascade}
\end{figure}

\para{Summary.}
We summarize the update procedure
in~\Cref{sec:proc-out-contra}.
Notice that we do not need to actually perform
the matrix multiplications as in 
\Cref{eqn:rvmat-mult} to get $\Rmatx_{\Dim+1}$ and $\Vmatx_{\Dim+1}$:
we can simply perform the row addition and row/column removal as needed.
The running time is then dominated by the column additions performed.
Since there are no more than $O(\fcnt)$ column 
additions, the time complexity is
$O(\fcnt^2)$.

\subsection{Update for degree $\Dim$ and $\Dim+2$}\label{sec:contra-up-other}
By \Cref{obsv:contra-adj-change},
\[\Dmatx_{\Dim+2}=\Smat_{\Dim+2}\Dmat_{\Dim+2},\] 
where $\Smat_{\Dim+2}$ is
similar to the matrix $\Smat$ in \Cref{sec:contra-up}.
Let 
\[\Rmatx_{\Dim+2}=\Smat_{\Dim+2}\Rmat_{\Dim+2}
\text{ and }\Vmatx_{\Dim+2}=\Vmat_{\Dim+2}.\]
Then, \[\Rmatx_{\Dim+2}=\Dmatx_{\Dim+2}\Vmatx_{\Dim+2}.\]
Notice that there may be two columns in $\Rmatx_{\Dim+2}$
having the same pivot, and we perform the cascade sum on $\Rmatx_{\Dim+2}$ and $\Vmatx_{\Dim+2}$ as in
\Cref{sec:contra-up} to make the pivots distinct.

To see the update for degree $\Dim$,
first notice that $\partial\ssx_1=\partial\ssx_2$
because
there are no insertions and deletions
in between the two arrows $\cplx_{i-1}\bakinctosp{\ssx} 
\cplx_i 
\inctosp{\ssx} \cplx_{i+1}$ in $\filt$.
Then, $\ssx_2$ is a \emph{positive} cell in $\filtnz$,
where $\ssx_1+\ssx_2$ is a cycle created by $\ssx_2$.
We then have 
$\col{\Rmat_{\Dim}}{{\ssx}_2}=0$. So we only need to
add $\col{\Vmat_{\Dim}}{{\ssx}_2}$ to any other $\col{\Vmat_{\Dim}}{\xssx}$
with $\matel{\Vmat_{\Dim}}{{\ssx}_2}{\xssx}\neq 0$,
as done similarly in Step I in \Cref{sec:contra-up}.
Moreover, we have 
\[\Dmatx_{\Dim}=\Dmat_{\Dim}\Tmat_{\Dim},\]
where $\Tmat_{\Dim}$ is 
similar to the matrix $\Tmat$ in \Cref{sec:contra-up}.
Let
\[\Rmatx_{\Dim}=\Rmat_{\Dim}\Tmat_{\Dim}\text{ and }
\Vmatx_{\Dim}=\Tmat_{\Dim}^\tran\Vmat_{\Dim}\Tmat_{\Dim}.\]
We then have
\begin{equation*}
\Rmatx_{\Dim}=\Dmatx_{\Dim}\Vmatx_{\Dim}, 
\end{equation*}
whose justification is the same as that for \Cref{eqn:RT-DTTVT}
given by the proof of \Cref{prop:Rx-DxVx}.
Notice that columns in $\Rmatx_{\Dim}$ have
distinct pivots.

Since the update presented above 
is also dominated by the column additions
and there are no more than $O(\fcnt)$ of them,
the update
for degree $\Dim$ and $\Dim+2$ also takes $O(\fcnt^2)$ time.

We now conclude:

\begin{theorem}
For an outward contraction operation,
given a reduced decomposition 
$\Rmat_{\dimx}=\Dmat_{\dimx}\Vmat_{\dimx}$
for $\filtnz$ for each $\dimx$,
one can obtain a reduced decomposition 
$\Rmatx_{\dimx}=\Dmatx_{\dimx}\Vmatx_{\dimx}$
for $\filtxnz$ for each $\dimx$ in $O(\fcnt^2)$ time.
\end{theorem}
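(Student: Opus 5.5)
The proof assembles the degree-wise updates of \Cref{sec:contra-up,sec:contra-up-other} and checks three things for each degree: validity ($\Rmatx_\dimx=\Dmatx_\dimx\Vmatx_\dimx$ with $\Vmatx_\dimx$ upper unitriangular), reducedness (distinct nonzero column pivots), and the $O(\fcnt^2)$ cost. By \Cref{obsv:contra-adj-change}, $\Dmatx_\dimx=\Dmat_\dimx$ for $\dimx\notin\{\Dim,\Dim+1,\Dim+2\}$. For those degrees we keep $\Rmatx_\dimx=\Rmat_\dimx$ and $\Vmatx_\dimx=\Vmat_\dimx$, up to relabeling $\ssx_1\mapsto\ssx_0$ and $\cdec\ssx_2\mapsto\cdec\ssx_0$. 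This costs nothing beyond renaming. Order is preserved because $\ssx_0$ takes the position of $\ssx_1$ in the first half and $\cdec\ssx_0$ takes that of $\cdec\ssx_2$ in the second half.

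For degree $\Dim+1$ I would proceed as follows.
\begin{enumerate}
\item Step~I keeps validity, because only left-to-right column additions are used. After Step~I.2, \Cref{obsv:I-2-pivots} shows that the row of $\cdec\ssx_1$ in $\Vmat_{\Dim+1}$ is zero off the diagonal.
\item \Cref{prop:Rx-DxVx} then gives validity of the pair defined in \Cref{eqn:rvmat-mult}. $\Vmatx_{\Dim+1}=\Tmat^\tran\Vmat_{\Dim+1}\Tmat$ is a principal submatrix of an upper unitriangular matrix, so it is itself upper unitriangular.
\item Deleting the column of $\cdec\ssx_1$ removes the only pivot clash left by \Cref{obsv:I-2-pivots}, so $\Rmat_{\Dim+1}\Tmat$ is reduced.
\item Applying $\Smat$ changes only a column whose pivot is $\ssx_2$. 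After relabeling, its new pivot is strictly lower in the order, and at most one other column can share it.
\item The cascade sum is again left-to-right, so validity is kept. As argued after \Cref{fig:cascade}, it always keeps at most one clashing pair and strictly decreases the pivot of the moving column. It therefore terminates with distinct pivots after $O(\fcnt)$ additions.
\end{enumerate}
I expect the main subtlety to be this last point: that the cascade never creates two simultaneous clashes. I would argue it by induction. Before each addition, all pivots other than the moving column's are distinct. After the addition, the moving column's pivot strictly decreases, so it can clash with at most one existing column.

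For degree $\Dim+2$, $\Dmatx_{\Dim+2}=\Smat_{\Dim+2}\Dmat_{\Dim+2}$, so $\Rmatx_{\Dim+2}=\Smat_{\Dim+2}\Rmat_{\Dim+2}$ with $\Vmatx_{\Dim+2}=\Vmat_{\Dim+2}$ is valid immediately. The row operation adds the row of $\cdec\ssx_1$ into the later row of $\cdec\ssx_2$ and removes the row of $\cdec\ssx_1$. So the only columns whose pivots change are those with pivot $\cdec\ssx_1$, and there is at most one. The same cascade argument then fixes the clash. For degree $\Dim$, we use $\partial\ssx_1=\partial\ssx_2$. Because $\Rmat_\Dim$ is reduced and the two columns of $\Dmat_\Dim$ coincide, $\col{\Rmat_\Dim}{\ssx_2}=0$. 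Hence clearing the row of $\ssx_2$ in $\Vmat_\Dim$ by left-to-right additions leaves $\Rmat_\Dim$ unchanged. The argument of \Cref{prop:Rx-DxVx} then applies verbatim with $\Tmat_\Dim$. Deleting a zero column cannot create pivot clashes, and neither can deleting the row of $\Vmat_\Dim$, so $\Rmatx_\Dim$ is reduced.

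For the complexity, each update performs only row additions and removals (each $O(\fcnt)$) and $O(\fcnt)$ column additions. Step~I contributes at most one addition per column and the cascade $O(\fcnt)$ more, each costing $O(\fcnt)$. The relabelings in the remaining degrees are free. Summing over the three affected degrees gives $O(\fcnt^2)$ in total.
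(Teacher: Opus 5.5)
Your proposal is correct and follows the paper's own argument. Validity is kept throughout by left-to-right column additions together with \Cref{prop:Rx-DxVx} and its analogues for degrees $\Dim$ and $\Dim+2$, reducedness is restored by the cascade sum, and the cost is dominated by $O(\fcnt)$ column additions per affected degree.

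One slip to fix: in degree $\Dim+2$ the row of $\cdec\ssx_2$ is the \emph{earlier} one, not the later one, since cones enter in reverse deletion order and $\ssx_2$ is deleted after $\ssx_1$. It is exactly because the removed row $\cdec\ssx_1$ is the later one that only the column with pivot $\cdec\ssx_1$ changes and its pivot moves down, as you conclude.
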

\begin{proof}
We have justified the time complexity of the algorithms presented. The correctness of
the update follows from the fact that the decomposition
$\Rmatx_{\dimx}=\Dmatx_{\dimx}\Vmatx_{\dimx}$ we maintain 
is always valid at each step (e.g., we only perform left-to-right column additions)
and is reduced at the end.
\end{proof}

\section{Outward expansion}\label{sec:out-exp}

We present in this section the update algorithm for outward expansion,
which is the inverse operation of outward contraction.
Throughout the section, as in \Cref{eqn:out-contrac-in-sec}, let $\filtx$
 be the filtration \emph{before}
the expansion,
and $\filt$,
the filtration after.
We have that the change from $\Dmatx_\dimx$ to $\Dmat_\dimx$
for each $\dimx$ is also the inverse of the change 
from $\Dmat_\dimx$ to $\Dmatx_\dimx$,
as in \Cref{obsv:contra-adj-change}, e.g.,
we still have $\Dmatx_{\Dim+1}=\Smat\Dmat_{\Dim+1}\Tmat$.

\subsection{Update for degree $\Dim+1$}\label{sec:exp-up}

We describe in detail the update for degree $\Dim+1$, 
i.e., given a reduced decomposition
\begin{equation}\label{eqn:exp-given-pplus1}
\Rmatx_{\Dim+1}=\Dmatx_{\Dim+1}\Vmatx_{\Dim+1}
\end{equation}
for $\filtxnz$,
we find the corresponding decomposition for $\filtnz$.
The updates for the degrees $\Dim$, $\Dim+2$ are briefly described in \Cref{sec:exp-up-other}.

Notice that the update for outward expansion
is not a simple ``inverse'' of the update for outward contraction.
For example, 
while adding rows $\row{\Dmat_{\Dim+1}}{\ssx_1}$, $\row{\Dmat_{\Dim+1}}{\ssx_2}$ 
to form the row $\row{\Dmatx_{\Dim+1}}{\ssx_0}$
can be written as a left multiplication by $\Smat$,
we cannot do so for the
``splitting'' of non-zero entries in $\row{\Dmatx_{\Dim+1}}{\ssx_0}$
to get the rows $\row{\Dmat_{\Dim+1}}{\ssx_1}$, $\row{\Dmat_{\Dim+1}}{\ssx_2}$.

To perform the update, first let 
$\Dmat'_{\Dim+1}$ be the matrix derived
from $\Dmatx_{\Dim+1}$ by ``splitting'' the non-zero entries
in $\row{\Dmatx_{\Dim+1}}{\ssx_0}$ to form the rows 
of $\ssx_1$, $\ssx_2$,
without adding the column of $\cdec{\ssx}_1$.
In other words, 
$\Dmat'_{\Dim+1}:=\Dmat_{\Dim+1}\Tmat$;
see \Cref{fig:mat_adj_chg}.
Now,
suppose that we are given the following reduced decomposition
\begin{equation}\label{sec:decomp-prime}
\Rmat'_{\Dim+1}=\Dmat'_{\Dim+1}\Vmat'_{\Dim+1},
\end{equation}
where $\Vmat'_{\Dim+1}$ has the same dimension as $\Vmatx_{\Dim+1}$.
Then, we can easily obtain a reduced decomposition
$\Rmat_{\Dim+1}=\Dmat_{\Dim+1}\Vmat_{\Dim+1}$
in $O(\fcnt^2)$ time
as follows: (1) 
let $\Vmat_{\Dim+1}$ be derived from $\Vmat'_{\Dim+1}$
by appending a new column that represents
a chain containing the single cell $\cdec{\ssx}_1$, 
and let $\Rmat_{\Dim+1}$ be derived from $\Rmat'_{\Dim+1}$
by appending $\partial\cdec{\ssx}_1$;
(2) perform a single iteration of the left-to-right persistence reduction~\cite{edelsbrunner2000topological} 
to make $\Rmat_{\Dim+1}$ reduced;
(3) using the algorithm in~\cite{cohen2006vines}, perform $O(\fcnt)$ transpositions to place the columns of 
$\cdec{\ssx}_1$ in the right place in $\Rmat_{\Dim+1}$ and $\Vmat_{\Dim+1}$,
with each transposition taking $O(\fcnt)$ time.

The rest of the subsection focuses on obtaining a
reduced decomposition as in \Cref{sec:decomp-prime}
from the one in~\Cref{eqn:exp-given-pplus1}.
Since columns in $\Dmatx_{\Dim+1}$ and $\Dmat'_{\Dim+1}$
have a one-to-one correspondence, 
let $\Vmat'_{\Dim+1}=\Vmatx_{\Dim+1}$.
We ``recompute'' the boundaries of columns in $\Vmat'_{\Dim+1}$
by letting $\Rmat'_{\Dim+1}=\Dmat'_{\Dim+1}\Vmat'_{\Dim+1}$.
We then have:
\begin{proposition}
\label{obsv:col-change}
For each $(\Dim+1)$-cell $\xssx$ in $\filtxnz$,
one of the following holds concerning the change from
the column $\col{\Rmatx_{\Dim+1}}{\xssx}$
to the column $\col{\Rmat'_{\Dim+1}}{\xssx}${\rm:}

\begin{itemize}
    \item Case 1: $\matel{\Rmatx_{\Dim+1}}{\ssx_0}{\xssx}=0$; $\matel{\Rmat'_{\Dim+1}}{\ssx_1}{\xssx}=0$ and 
    $\matel{\Rmat'_{\Dim+1}}{\ssx_2}{\xssx}=0$.
    \item Case 2: $\matel{\Rmatx_{\Dim+1}}{\ssx_0}{\xssx}=0$; $\matel{\Rmat'_{\Dim+1}}{\ssx_1}{\xssx}=1$ and 
    $\matel{\Rmat'_{\Dim+1}}{\ssx_2}{\xssx}=1$.
    \item Case 3: $\matel{\Rmatx_{\Dim+1}}{\ssx_0}{\xssx}=1$; $\matel{\Rmat'_{\Dim+1}}{\ssx_1}{\xssx}=1$ and 
    $\matel{\Rmat'_{\Dim+1}}{\ssx_2}{\xssx}=0$.
    \item Case 4: $\matel{\Rmatx_{\Dim+1}}{\ssx_0}{\xssx}=1$; $\matel{\Rmat'_{\Dim+1}}{\ssx_1}{\xssx}=0$ and 
    $\matel{\Rmat'_{\Dim+1}}{\ssx_2}{\xssx}=1$.
\end{itemize}

\noindent
All other entries in the two columns are the same.
\end{proposition}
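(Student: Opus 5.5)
The plan is to compare the two columns through the matrix identity $\Dmatx_{\Dim+1}=\Smat\Dmat'_{\Dim+1}$. This holds because $\Dmat'_{\Dim+1}=\Dmat_{\Dim+1}\Tmat$ and $\Dmatx_{\Dim+1}=\Smat\Dmat_{\Dim+1}\Tmat$ by \Cref{eqn:dmat-mult}, and the expansion is the exact inverse of the contraction. Since $\Vmat'_{\Dim+1}=\Vmatx_{\Dim+1}$ by definition, we get
\[\Rmatx_{\Dim+1}=\Dmatx_{\Dim+1}\Vmatx_{\Dim+1}=\Smat\Dmat'_{\Dim+1}\Vmat'_{\Dim+1}=\Smat\Rmat'_{\Dim+1}.\]
So for each $(\Dim+1)$-cell $\xssx$ of $\filtxnz$, the column $\col{\Rmatx_{\Dim+1}}{\xssx}$ is obtained from $\col{\Rmat'_{\Dim+1}}{\xssx}$ by the row operation encoded by $\Smat$.

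First, I would unpack $\Smat$ entrywise. By its definition in \Cref{sec:contra-up}, every row $\xssx'\neq\ssx_0$ of $\Smat\Rmat'_{\Dim+1}$ equals row $\xssx'$ of $\Rmat'_{\Dim+1}$. Row $\ssx_0$ equals the sum of rows $\ssx_1$ and $\ssx_2$ of $\Rmat'_{\Dim+1}$. Here $\ssx_0$ corresponds to $\ssx_1$ in the labeling, and row $\ssx_2$ exists only in $\Rmat'_{\Dim+1}$.

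Second, I would read off the statement. All rows other than $\ssx_0,\ssx_1,\ssx_2$ agree between the two columns, which gives the final sentence of the proposition. For the remaining entries over $\Zbb_2$ we have
\[\matel{\Rmatx_{\Dim+1}}{\ssx_0}{\xssx}=\matel{\Rmat'_{\Dim+1}}{\ssx_1}{\xssx}+\matel{\Rmat'_{\Dim+1}}{\ssx_2}{\xssx}.\]
Enumerating the four possible values of the pair $(\matel{\Rmat'_{\Dim+1}}{\ssx_1}{\xssx},\matel{\Rmat'_{\Dim+1}}{\ssx_2}{\xssx})$ gives exactly Cases 1 through 4: $(0,0)$ and $(1,1)$ give sum $0$, while $(1,0)$ and $(0,1)$ give sum $1$.

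The main thing to verify carefully is the identity $\Dmatx_{\Dim+1}=\Smat\Dmat'_{\Dim+1}$, together with the row and column bookkeeping. Concretely, I need that $\Dmat'_{\Dim+1}$ has the same column set as $\Dmatx_{\Dim+1}$, since $\Tmat$ removes only the column of $\cdec{\ssx}_1$. I also need that splitting row $\ssx_0$ into rows $\ssx_1,\ssx_2$ and then re-adding them recovers the original row. This follows because each nonzero entry of row $\ssx_0$ in $\Dmatx_{\Dim+1}$ goes to exactly one of $\ssx_1,\ssx_2$, as described in \Cref{obsv:contra-adj-change}. With that identity in hand, the rest is the associativity computation above, and nothing about reducedness is needed.
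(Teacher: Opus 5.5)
Your proof is correct and is essentially the paper's argument in matrix form. The paper counts, modulo $2$, the $(\Dim+1)$-cells of the chain $\col{\Vmatx_{\Dim+1}}{\xssx}$ that have $\ssx_0$ (resp.\ $\ssx_1$, $\ssx_2$) as a face, using that each such cell has exactly one of $\ssx_1,\ssx_2$ as a face in $\filtnz$; this is exactly the entrywise content of your identity $\Rmatx_{\Dim+1}=\Smat\Dmat'_{\Dim+1}\Vmat'_{\Dim+1}=\Smat\Rmat'_{\Dim+1}$, which also makes the ``all other entries agree'' clause explicit and is the form used for the arbitrary-field version in \Cref{obsv:col-change-adp}.
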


\begin{proof}
Suppose that $\matel{\Rmatx_{\Dim+1}}{\ssx_0}{\xssx}=0$. Then, $\ssx_0$ is a face of
even $(\Dim+1)$-cells in the chain 
$\col{\Vmatx_{\Dim+1}}{\xssx}$.
Notice that these $(\Dim+1)$-cells 
either have $\ssx_1$ or $\ssx_2$
as a $\Dim$-face in $\filtnz$.
Then, either both $\ssx_1$, $\ssx_2$ are faces of 
even $(\Dim+1)$-cells in
$\col{\Vmat'_{\Dim+1}}{\xssx}=\col{\Vmatx_{\Dim+1}}{\xssx}$ (Case 1)
or both are faces of 
odd $(\Dim+1)$-cells in
$\col{\Vmat'_{\Dim+1}}{\xssx}$ (Case 2).
It is also not hard to see Case 3 and 4
when $\ssx_0$ is a face of
odd $(\Dim+1)$-cells in
$\col{\Vmatx_{\Dim+1}}{\xssx}$.
\end{proof}

\begin{proposition}\label{prop:col-piv}
For each $(\Dim+1)$-cell $\xssx$ in $\filtxnz$,
$\pivot(\col{\Rmatx_{\Dim+1}}{\xssx})\neq\pivot(\col{\Rmat'_{\Dim+1}}{\xssx})$
if and only if $\pivot(\col{\Rmat'_{\Dim+1}}{\xssx})=\ssx_2$.
\end{proposition}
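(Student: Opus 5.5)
We will derive this directly from \Cref{obsv:col-change}, using the relative order of $\ssx_0,\ssx_1,\ssx_2$ among the rows. The rows of $\Rmatx_{\Dim+1}$ and $\Rmat'_{\Dim+1}$ coincide except that the single row $\ssx_0$ is replaced by the two rows $\ssx_1,\ssx_2$. By the conversion in \Cref{sec:conversion}, $\ssx_0$ in $\filtxnz$ corresponds to $\ssx_1$ in $\filtnz$, so it occupies the same relative position as $\ssx_1$. The new row $\ssx_2$ is inserted later, because $\ssx_2$ is inserted after $\ssx_1$ in $\dfilt$. Consequently, any row $\xssxx\ne\ssx_0$ of $\Rmatx_{\Dim+1}$ compares with $\ssx_1$ exactly as it compares with $\ssx_0$, and we have $\ssx_1<\ssx_2$. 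Fix a column $\xssx$ and write $x=\col{\Rmatx_{\Dim+1}}{\xssx}$ and $y=\col{\Rmat'_{\Dim+1}}{\xssx}$. By \Cref{obsv:col-change}, $x$ and $y$ agree on every row other than $\ssx_0,\ssx_1,\ssx_2$.

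For the ``if'' direction, suppose $\pivot(y)=\ssx_2$. Then $\ssx_2$ is a row that does not exist in $\Rmatx_{\Dim+1}$, so $\pivot(x)$ cannot be $\ssx_2$, and the pivots differ. Here we identify $\ssx_1$ with $\ssx_0$, i.e., pivots are compared as cells under the renaming; this is the convention implicitly used in the paper.

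For the ``only if'' direction, suppose $\pivot(y)\neq\ssx_2$, and split on where the pivot of $y$ lies.

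\emph{Pivot of $y$ is a row $\xssxx$ above $\ssx_2$ in the row order, other than $\ssx_1$.} Then all entries of $y$ below $\xssxx$ are zero, including the entries at $\ssx_1$ and $\ssx_2$ whenever these lie below $\xssxx$. By the case list, a zero entry at $\ssx_2$ rules out Cases 2 and 4, leaving Case 1 or Case 3. If $\ssx_1$ also lies below $\xssxx$, its entry is zero as well, so only Case 1 remains and $x$ is zero at $\ssx_0$. If $\ssx_1$ lies above $\xssxx$, the entry at $\ssx_0$ is irrelevant to the pivot. In both situations $\pivot(x)=\xssxx$.

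\emph{Pivot of $y$ is $\ssx_1$.} Then the entry of $y$ at $\ssx_2$ is $0$ and the entry at $\ssx_1$ is $1$, which is exactly Case 3. So $x$ has a $1$ at $\ssx_0$, agrees with $y$ below it, and $\pivot(x)=\ssx_0\equiv\ssx_1$.

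\emph{$y=0$.} Then the entries at $\ssx_1$ and $\ssx_2$ are both zero, which is Case 1, so $x=0$ as well.

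The only delicate point I expect is justifying the row-order claim that $\ssx_1$ sits at the position of $\ssx_0$ with $\ssx_2$ after it. This is a bookkeeping check against the first half of the conversion, and it is also where the pivot identification $\ssx_0\leftrightarrow\ssx_1$ has to be made explicit.
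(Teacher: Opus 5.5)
You use the same ingredients as the paper's proof: \Cref{obsv:col-change}, the row order $\ssx_1<\ssx_2$, and the identification of $\ssx_0$ with $\ssx_1$. Your ``if'' direction is exactly the paper's.

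For ``only if'', the paper argues directly. A pivot change can only come from Case 2 or 4, so $y$ has a $1$ at $\ssx_2$. Since $x$ and $y$ agree on every row after $\ssx_2$, a pivot of $x$ beyond $\ssx_2$ would also be the pivot of $y$. Hence $\pivot(x)<\ssx_2$ (or $x=0$), which forces $\pivot(y)=\ssx_2$. Your contrapositive version is more explicit than the paper about why Cases 1 and 3 cannot move the pivot. That part is fine.

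However, your case split for ``only if'' is incomplete. You treat three cases: $\pivot(y)$ above $\ssx_2$ (other than $\ssx_1$), $\pivot(y)=\ssx_1$, and $y=0$. You never treat $\pivot(y)=\xssxx$ with $\xssxx>\ssx_2$, i.e.\ below $\ssx_2$ in the row order.

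Your first-case argument does not cover this. It relies on the entry of $y$ at $\ssx_2$ being zero, which need not hold there: Cases 2 and 4 are entirely possible. The fix is one line, and it is precisely the step the paper makes. Every row where $x$ and $y$ can differ ($\ssx_0$, sitting at the position of $\ssx_1$, and $\ssx_2$) lies above $\xssxx$. So $x$ has a $1$ at $\xssxx$ and zeros below it, giving $\pivot(x)=\xssxx=\pivot(y)$. With this case added, your proof is complete.
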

\begin{proof}
Suppose that $\pivot(\col{\Rmatx_{\Dim+1}}{\xssx})\neq\pivot(\col{\Rmat'_{\Dim+1}}{\xssx})$.
Then, either Case 2 or Case 4 in \Cref{obsv:col-change} happens because
the other cases cannot make the pivot change.
If $\pivot(\col{\Rmatx_{\Dim+1}}{\xssx})>\ssx_2$
(here we naturally consider the $\Dim$-cell $\pivot(\col{\Rmatx_{\Dim+1}}{\xssx})$ in $\filtxnz$
as a  $\Dim$-cell in $\filtnz$),
then $\pivot(\col{\Rmatx_{\Dim+1}}{\xssx})=\pivot(\col{\Rmat'_{\Dim+1}}{\xssx})$.
So we have $\pivot(\col{\Rmatx_{\Dim+1}}{\xssx})<\ssx_2$,
which implies that $\pivot(\col{\Rmat'_{\Dim+1}}{\xssx})=\ssx_2$.

Now suppose that $\pivot(\col{\Rmat'_{\Dim+1}}{\xssx})=\ssx_2$.
Then, $\pivot(\col{\Rmatx_{\Dim+1}}{\xssx})\neq\pivot(\col{\Rmat'_{\Dim+1}}{\xssx})$
since $\pivot(\col{\Rmatx_{\Dim+1}}{\xssx})\neq\ssx_2$.
\end{proof}

\begin{proposition}\label{prop:recov-piv}
For any  
$(\Dim+1)$-cells $\xssx$, $\xssxx$ in $\filtxnz$ with
$\pivot(\col{\Rmat'_{\Dim+1}}{\xssx})=\pivot(\col{\Rmat'_{\Dim+1}}{\xssxx})=\ssx_2$,
one has 
$\pivot(\col{\Rmatx_{\Dim+1}}{\xssx}+\col{\Rmatx_{\Dim+1}}{\xssxx})=\pivot(\col{\Rmat'_{\Dim+1}}{\xssx}+\col{\Rmat'_{\Dim+1}}{\xssxx})$.
\end{proposition}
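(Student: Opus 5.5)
We argue by linearity together with the case analysis of \Cref{obsv:col-change}. Write $a=\col{\Rmatx_{\Dim+1}}{\xssx}$, $b=\col{\Rmatx_{\Dim+1}}{\xssxx}$, $a'=\col{\Rmat'_{\Dim+1}}{\xssx}$, $b'=\col{\Rmat'_{\Dim+1}}{\xssxx}$. Because $\Rmatx_{\Dim+1}=\Dmatx_{\Dim+1}\Vmatx_{\Dim+1}$ and $\Rmat'_{\Dim+1}=\Dmat'_{\Dim+1}\Vmat'_{\Dim+1}$ with $\Vmat'_{\Dim+1}=\Vmatx_{\Dim+1}$, the sums $a+b$ and $a'+b'$ are the boundaries, in $\filtxnz$ and in the split setting respectively, of the single chain $\col{\Vmatx_{\Dim+1}}{\xssx}+\col{\Vmatx_{\Dim+1}}{\xssxx}$. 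Hence the parity argument in the proof of \Cref{obsv:col-change} applies verbatim to this chain. Consequently $a+b$ and $a'+b'$ agree on all rows other than $\ssx_0,\ssx_1,\ssx_2$, and their entries on those rows fall into one of Cases 1--4.

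Next we use the hypothesis. Since $\pivot(a')=\pivot(b')=\ssx_2$, we have $\matel{a'}{\ssx_2}{}=\matel{b'}{\ssx_2}{}=1$, and both columns vanish on every row above $\ssx_2$. Therefore $a'+b'$ is zero on row $\ssx_2$ and on all rows above it. By \Cref{obsv:col-change} applied to each column separately, $a$ and $b$ also vanish on every row strictly above $\ssx_2$, apart from the rows being renamed. The row $\ssx_0$ of $\filtxnz$ corresponds to $\ssx_1<\ssx_2$, so $a$ and $b$ vanish on all rows of $\filtxnz$ lying after $\ssx_2$ in the $\filtnz$ order. The same then holds for $a+b$.

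It remains to compare the two sums below $\ssx_2$. We have $\matel{(a'+b')}{\ssx_2}{}=0$, so for the chain sum only Case 1 or Case 3 can occur. In Case 1, $a+b$ and $a'+b'$ are zero on rows $\ssx_0$, $\ssx_1$ and $\ssx_2$ and coincide elsewhere. In Case 3, row $\ssx_0$ of $a+b$ equals row $\ssx_1$ of $a'+b'$, and all other entries coincide. In either case, identifying $\ssx_0$ with $\ssx_1$, the two columns are equal as vectors once the zero row $\ssx_2$ is discarded. Their pivots, read as $\Dim$-cells of $\filtnz$, therefore agree. This also covers the situation where both sums are zero, in which case both pivots are undefined.

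The main obstacle is purely bookkeeping. We must justify that the case analysis of \Cref{obsv:col-change} may be applied to a sum of two columns, which holds because it is a statement about an arbitrary chain $c$ and its boundaries $\Dmatx_{\Dim+1}c$ and $\Dmat'_{\Dim+1}c$. We must also handle the ordering convention that places $\ssx_0$ of $\filtxnz$ at the position of $\ssx_1$ in $\filtnz$, consistent with the convention used in the proof of \Cref{prop:col-piv}.
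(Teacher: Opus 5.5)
Your proof is correct and has the same outline as the paper's. Both show three things: the two summed columns agree on all rows other than $\ssx_0,\ssx_1,\ssx_2$; the $\ssx_2$-entry of $\col{\Rmat'_{\Dim+1}}{\xssx}+\col{\Rmat'_{\Dim+1}}{\xssxx}$ vanishes; and the $\ssx_0$-entry of the $\Rmatx_{\Dim+1}$-sum equals the $\ssx_1$-entry of the $\Rmat'_{\Dim+1}$-sum. The difference is how you get that last identity: you apply \Cref{obsv:col-change} directly to the summed chain $\col{\Vmatx_{\Dim+1}}{\xssx}+\col{\Vmatx_{\Dim+1}}{\xssxx}$, so only Cases 1 and 3 remain. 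This replaces the four-way case analysis the paper omits, and it is the $\Zbb_2$ form of the linear relation the paper itself uses in \Cref{sec:gen-field}.
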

\begin{proof}
Since $\pivot(\col{\Rmat'_{\Dim+1}}{\xssx})=\pivot(\col{\Rmat'_{\Dim+1}}{\xssxx})=\ssx_2$,
the columns of $\xssx$ (resp.\ $\xssxx$)
in $\Rmatx_{\Dim+1}$ and $\Rmat'_{\Dim+1}$
fall in either Case 2 or  4 in \Cref{obsv:col-change}.
Denote $\col{\Rmatx_{\Dim+1}}{\xssx}+\col{\Rmatx_{\Dim+1}}{\xssxx}$
as $\mathrm{col}_1$ and $\col{\Rmat'_{\Dim+1}}{\xssx}+\col{\Rmat'_{\Dim+1}}{\xssxx}$
as $\mathrm{col}_2$.
We then have $\mathrm{col}_1[\ssx_0]=\mathrm{col}_2[\ssx_1]$,
which can be verified by a case analysis of the different values of
$\matel{\Rmatx_{\Dim+1}}{\ssx_0}{\xssx}$, $\matel{\Rmatx_{\Dim+1}}{\ssx_0}{\xssxx}$, 
$\matel{\Rmat'_{\Dim+1}}{\ssx_1}{\xssx}$, and $\matel{\Rmat'_{\Dim+1}}{\ssx_1}{\xssxx}$.
We omit the case analysis.
Since $\mathrm{col}_2[\ssx_2]=0$ and the other entries 
of $\mathrm{col}_1$, $\mathrm{col}_2$ are the same, we have 
$\pivot(\mathrm{col}_1)=\pivot(\mathrm{col}_1)$.
\end{proof}

\Cref{prop:col-piv} characterizes those columns whose pivots change from 
$\Rmatx_{\Dim+1}$ to $\Rmat'_{\Dim+1}$.
\Cref{prop:recov-piv} indicates that summing two such columns in 
$\Rmat'_{\Dim+1}$ ``recovers'' pivot of the sum in $\Rmatx_{\Dim+1}$.
Since $\Rmatx_{\Dim+1}$ have distinct pivots,
we utilize the above facts to guide our update to make $\Rmat'_{\Dim+1}$
reduced.
We then have the following steps.

\begin{itemize}

    \item Step 1:
    While there are two columns $\col{\Rmat'_{\Dim+1}}{\xssx}$ and $\col{\Rmat'_{\Dim+1}}{\xssxx}$,
    $\xssx<\xssxx$, with pivots being $\ssx_2$
    such that 
    \[\pivot(\col{\Rmatx_{\Dim+1}}{\xssx})<\pivot(\col{\Rmatx_{\Dim+1}}{\xssxx}),\]
    add $\col{\Rmat'_{\Dim+1}}{\xssx}$ to $\col{\Rmat'_{\Dim+1}}{\xssxx}$
    and $\col{\Vmat'_{\Dim+1}}{\xssx}$ to $\col{\Vmat'_{\Dim+1}}{\xssxx}$.
    After this, $\col{\Rmat'_{\Dim+1}}{\xssxx}$ ``regains'' its old (unique) pivot
    in $\col{\Rmatx_{\Dim+1}}{\xssxx}$,
    i.e., $\pivot(\col{\Rmat'_{\Dim+1}}{\xssxx})=\pivot(\col{\Rmatx_{\Dim+1}}{\xssxx})$,
    due to \Cref{prop:col-piv} and the fact that
    $\pivot(\col{\Rmatx_{\Dim+1}}{\xssx})<\pivot(\col{\Rmatx_{\Dim+1}}{\xssxx})$
    
    \item Step 2:
    Let
    \[\col{\Rmat'_{\Dim+1}}{\xssx_1},\ldots,\col{\Rmat'_{\Dim+1}}{\xssx_\ell}\]
    be all columns in $\Rmat'_{\Dim+1}$
    whose pivots are still $\ssx_2$,
    where \[\xssx_1<\cdots<\xssx_\ell.\]
    We have \[\pivot(\col{\Rmatx_{\Dim+1}}{\xssx_1})>\cdots>\pivot(\col{\Rmatx_{\Dim+1}}{\xssx_\ell}),\]
    and specifically
    $\xssx_1,\ldots,\xssx_\ell$ are at the {Pareto front}
    of the partial order defined by the order 
    of column indices and the order of pivots in $\Rmatx_{\Dim+1}$. 
    
    We then perform the following update:
    \begin{itemize}
        \item For $j=\ell,\ell-1,\ldots,2$,
        add $\col{\Rmat'_{\Dim+1}}{\xssx_{j-1}}$
        to $\col{\Rmat'_{\Dim+1}}{\xssx_j}$ and
        $\col{\Vmat'_{\Dim+1}}{\xssx_{j-1}}$ to
        $\col{\Vmat'_{\Dim+1}}{\xssx_j}$.
    \end{itemize}

    \opt{SoCG}{\smallskip}
    After this, each $\col{\Rmat'_{\Dim+1}}{\xssx_{j}}$ with $j>1$
    gets the pivot of $\col{\Rmatx_{\Dim+1}}{\xssx_{j-1}}$
    because $\pivot(\col{\Rmatx_{\Dim+1}}{\xssx_{j-1}})>\pivot(\col{\Rmatx_{\Dim+1}}{\xssx_{j}})$.
    Also, $\col{\Rmat'_{\Dim+1}}{\xssx_{1}}$
    has $\ssx_2$ as pivot. Hence, columns in $\Rmat'_{\Dim+1}$ now have
    distinct pivots.
\end{itemize}

\para{Summary.}
We summarize the update procedure
in~\Cref{sec:proc-out-expan}.
Since no more than $O(\fcnt)$ column 
additions are performed, the time complexity is
$O(\fcnt^2)$.

\subsection{Update for degree $\Dim$ and $\Dim+2$}\label{sec:exp-up-other}

To obtain $\Rmat_{\Dim}=\Dmat_{\Dim}\Vmat_{\Dim}$, one way is to use the approach described
at the beginning of \Cref{sec:exp-up}, that is, to append $\partial\ssx_2$ and $\ssx_2$
to $\Rmatx_{\Dim}$ and $\Vmatx_{\Dim}$, respectively, and to perform $O(\fcnt)$ transpositions
to place the two new columns in the correct positions.
However, this is an overkill: since $\partial\ssx_1=\partial\ssx_2$ (see \Cref{sec:contra-up-other}),
inserting $\ssx_2$ in $\filtnz$ always creates a new cycle
$\ssx_1+\ssx_2$. Hence, we only need to add a column $0$ and a column
$\ssx_1+\ssx_2$ to $\Rmatx_{\Dim}$ and $\Vmatx_{\Dim}$, respectively,
at the appropriate place for $\ssx_2$, which finishes the update for degree $\Dim$.

For degree $\Dim+2$, we notice that the change from $\Dmatx_{\Dim+2}$ to $\Dmat_{\Dim+2}$
is just the ``splitting'' of $\row{\Dmatx_{\Dim+2}}{\cdec{\ssx}_0}$
to get $\row{\Dmat_{\Dim+2}}{\cdec{\ssx}_2}$ and $\row{\Dmat_{\Dim+2}}{\cdec{\ssx}_1}$.
For this, we use the update procedure for obtaining the decomposition in \Cref{sec:decomp-prime}
from that in~\Cref{eqn:exp-given-pplus1}, as described in \Cref{sec:exp-up}.

We now conclude:

\begin{theorem}
For an outward expansion operation,
given a reduced decomposition 
$\Rmatx_{\dimx}=\Dmatx_{\dimx}\Vmatx_{\dimx}$
for $\filtxnz$ for each $\dimx$,
one can obtain a reduced decomposition 
$\Rmat_{\dimx}=\Dmat_{\dimx}\Vmat_{\dimx}$
for $\filtnz$ for each $\dimx$ in $O(\fcnt^2)$ time.
\end{theorem}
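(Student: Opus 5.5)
The argument mirrors the one for outward contraction. Correctness will follow once we show that every decomposition we maintain is valid and the final one is reduced. The running time will follow by counting column additions and transpositions, each costing $O(\fcnt)$. By the inverse of \Cref{obsv:contra-adj-change}, only degrees $\Dim$, $\Dim+1$, $\Dim+2$ change; for every other $\dimx$ we set $\Rmat_\dimx=\Rmatx_\dimx$ and $\Vmat_\dimx=\Vmatx_\dimx$. I will treat the three affected degrees separately, in the order of \Cref{sec:exp-up,sec:exp-up-other}.

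\textbf{Degree $\Dim+1$.}
\begin{enumerate}
\item I will set $\Vmat'_{\Dim+1}=\Vmatx_{\Dim+1}$ and $\Rmat'_{\Dim+1}=\Dmat'_{\Dim+1}\Vmat'_{\Dim+1}$. This decomposition is valid by construction, since $\Vmat'_{\Dim+1}$ is upper unitriangular. Computing $\Rmat'_{\Dim+1}$ needs only the entries on the rows $\ssx_1,\ssx_2$, which can be read column by column in $O(\fcnt^2)$ time.
\item By \Cref{prop:col-piv}, the only pivot clashes are among columns whose pivot is $\ssx_2$. I will remove them with Steps 1 and 2 of \Cref{sec:exp-up}.
\item Both steps add only an earlier column to a later one, so validity is preserved.
\item Reducedness follows from \Cref{prop:recov-piv} together with the fact that $\Rmatx_{\Dim+1}$ is reduced. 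After Step 1 each modified column regains its old unique pivot. After Step 2 each $\xssx_j$ with $j>1$ inherits the old pivot of $\xssx_{j-1}$, and $\xssx_1$ keeps $\ssx_2$.
\item Since no old pivot equals $\ssx_2$, all pivots are now distinct.
\item I will then append the column $\cdec{\ssx}_1$, perform one reduction pass for it, and apply $O(\fcnt)$ transpositions of~\cite{cohen2006vines} to move it to its position. Each of these costs $O(\fcnt)$ and preserves a reduced decomposition, giving total $O(\fcnt^2)$.
\end{enumerate}

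\textbf{Degree $\Dim$.} Because $\partial\ssx_1=\partial\ssx_2$, inserting the column $0$ in $\Rmat_\Dim$ and the column $\ssx_1+\ssx_2$ in $\Vmat_\Dim$ at the position of $\ssx_2$ gives $\Rmat_\Dim=\Dmat_\Dim\Vmat_\Dim$. Here $\Vmat_\Dim$ gains a new row for $\ssx_2$ that is zero except on its diagonal. The new $\Vmat$ column has bottom entry $\ssx_2$, since $\ssx_1<\ssx_2$, so $\Vmat_\Dim$ stays upper unitriangular. Pivots of $\Rmat_\Dim$ are untouched because the new column is zero.

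\textbf{Degree $\Dim+2$.} The change is a row splitting of $\cdec{\ssx}_0$ into $\cdec{\ssx}_2,\cdec{\ssx}_1$ with no column insertion. The column set is unchanged because the cone $\cdec{\ssx}_1$ is a $(\Dim+1)$-cell. The same argument as for degree $\Dim+1$ therefore applies verbatim, with the roles of $\ssx_1,\ssx_2$ played by $\cdec{\ssx}_2,\cdec{\ssx}_1$. The cell that plays the role of ``$\ssx_2$'' is whichever of the two rows is later in $\filtnz$.

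\textbf{Main obstacle.} The step needing most care is checking that \Cref{prop:col-piv,prop:recov-piv} transfer correctly to degree $\Dim+2$ under the reversed order of cone cells, and that the ordering assumption used in Step 2 holds there. In particular, the order of the split rows relative to the remaining rows must be consistent. I expect this to reduce to the observation that any row strictly between the two split rows has zero entries in columns affected by the change, exactly as in the proof of \Cref{prop:col-piv}.
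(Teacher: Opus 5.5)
Your proposal is correct and follows the paper's own argument essentially step for step: set $\Vmat'_{\Dim+1}=\Vmatx_{\Dim+1}$, repair the $\ssx_2$-pivot clashes with Steps~1--2 via \Cref{prop:col-piv,prop:recov-piv}, append the column of $\cdec{\ssx}_1$ and transpose it into place, insert a zero column with $\Vmat_{\Dim}$-column $\ssx_1+\ssx_2$ in degree $\Dim$, and reuse the row-splitting procedure in degree $\Dim+2$ with $\cdec{\ssx}_2,\cdec{\ssx}_1$ in the roles of $\ssx_1,\ssx_2$. One correction to your closing remark: the transfer to degree $\Dim+2$ needs only that the retained row $\cdec{\ssx}_2$ occupies the position of $\cdec{\ssx}_0$, that the new row $\cdec{\ssx}_1$ comes after it, and that all other cells keep their relative order; the proof of \Cref{prop:col-piv} never uses vanishing of rows strictly between the two split rows, and that property in fact fails in general.
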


\section{Conclusion} 

In this paper, we propose algorithms for 
updating zigzag representatives over eight operations
making  local changes to a zigzag filtration.
The update is performed
on the $R=DV$ decomposition
for a non-zigzag filtration constructed from the original zigzag filtration.
The zigzag representatives can then be obtained based on \Cref{thm:extr-zzrep}.
Complexities for
updating the $R=DV$ decomposition over the operations are summarized as follows:
\begin{itemize}
    \item Forward/backward switch: $O(\fcnt)$;
    \item Inward/outward switch: $O(1)$;
    \item Inward/outward expansion/contraction: $O(\fcnt^2)$.
\end{itemize}

The $R=DV$ decomposition for persistence is not unique~\cite{cohen2006vines}.
Just like~\cite{cohen2006vines,giunti2023pruning,luo2024accelerating},
our updates may not produce the same decomposition
as the standard reduction~\cite{edelsbrunner2000topological},
and therefore the same representatives.
Also,
our updates may not preserve the representatives
after an update operation and its reversal,
just like~\cite{cohen2006vines}.
It is an  interesting open question
whether there are efficient algorithms for such 
updates.

\bibliography{refs}

\appendix

\section{Summary of update procedure for degree $\Dim+1$ for outward contraction}\label{sec:proc-out-contra}
Given a reduced decomposition 
\[\Rmat_{\Dim+1}=\Dmat_{\Dim+1}\Vmat_{\Dim+1}\]
for $\filtnz$, do the following:

\begin{enumerate}
    \item If $\col{\Rmat_{\Dim+1}}{\cdec{\ssx}_1}=0$:
add $\col{\Vmat_{\Dim+1}}{\cdec{\ssx}_1}$ to any other $\col{\Vmat_{\Dim+1}}{\xssx}$
with $\matel{\Vmat_{\Dim+1}}{\cdec{\ssx}_1}{\xssx}\neq 0$.
    \item 
    If $\col{\Rmat_{\Dim+1}}{\cdec{\ssx}_1}\neq 0$:
    \begin{enumerate}
        \item While there are two columns $\col{\Vmat_{\Dim+1}}{\xssx}$ and $\col{\Vmat_{\Dim+1}}{\xssxx}$,
        $\xssx<\xssxx$, with
        non-zero entries on $\cdec{\ssx}_1$ such that 
        \[\pivot(\col{\Rmat_{\Dim+1}}{\xssx})<\pivot(\col{\Rmat_{\Dim+1}}{\xssxx}),\]
        add $\col{\Vmat_{\Dim+1}}{\xssx}$ to $\col{\Vmat_{\Dim+1}}{\xssxx}$
        and $\col{\Rmat_{\Dim+1}}{\xssx}$ to $\col{\Rmat_{\Dim+1}}{\xssxx}$.
    
        \item Let
        \[\col{\Vmat_{\Dim+1}}{\xssx_1},\ldots,\col{\Vmat_{\Dim+1}}{\xssx_\ell}\]
        be all columns in $\Vmat_{\Dim+1}$
        that still have non-zero entries on $\cdec{\ssx}_1$,
        where \[\xssx_1=\cdec{\ssx}_1<\xssx_2<\cdots<\xssx_\ell.\]
        \item For $j=\ell,\ell-1,\ldots,2$,
            add $\col{\Rmat_{\Dim+1}}{\xssx_{j-1}}$
            to $\col{\Rmat_{\Dim+1}}{\xssx_j}$ and
            $\col{\Vmat_{\Dim+1}}{\xssx_{j-1}}$ to
            $\col{\Vmat_{\Dim+1}}{\xssx_j}$.
    \end{enumerate}

    \item Let 
\begin{equation*}
\Rmatx_{\Dim+1}=\Smat\Rmat_{\Dim+1}\Tmat\text{ and }\Vmatx_{\Dim+1}=\Tmat^\tran\Vmat_{\Dim+1}\Tmat.
\end{equation*}
    \item 
    While there are two columns $\col{\Rmatx_{\Dim+1}}{\xssx}$ and $\col{\Rmatx_{\Dim+1}}{\xssxx}$,
        $\xssx<\xssxx$, with the same pivot,
        add $\col{\Rmatx_{\Dim+1}}{\xssx}$ to $\col{\Rmatx_{\Dim+1}}{\xssxx}$
        and $\col{\Vmatx_{\Dim+1}}{\xssx}$ to $\col{\Vmatx_{\Dim+1}}{\xssxx}$.
\end{enumerate}

\section{Summary of update procedure for degree $\Dim+1$ for outward expansion}\label{sec:proc-out-expan}

Given a reduced decomposition
\begin{equation*}
\Rmatx_{\Dim+1}=\Dmatx_{\Dim+1}\Vmatx_{\Dim+1}
\end{equation*}
for $\filtxnz$,
do the following:

\begin{enumerate}
  \item
Let $\Vmat'_{\Dim+1}=\Vmatx_{\Dim+1}$ and $\Rmat'_{\Dim+1}=\Dmat'_{\Dim+1}\Vmat'_{\Dim+1}$.

\item
While there are two columns $\col{\Rmat'_{\Dim+1}}{\xssx}$ and $\col{\Rmat'_{\Dim+1}}{\xssxx}$,
    $\xssx<\xssxx$, with pivots being $\ssx_2$
    such that 
    \[\pivot(\col{\Rmatx_{\Dim+1}}{\xssx})<\pivot(\col{\Rmatx_{\Dim+1}}{\xssxx}),\]
    add $\col{\Rmat'_{\Dim+1}}{\xssx}$ to $\col{\Rmat'_{\Dim+1}}{\xssxx}$
    and $\col{\Vmat'_{\Dim+1}}{\xssx}$ to $\col{\Vmat'_{\Dim+1}}{\xssxx}$.

    \item
      Let
    $\col{\Rmat'_{\Dim+1}}{\xssx_1},\ldots,\col{\Rmat'_{\Dim+1}}{\xssx_\ell}$
    be all columns in $\Rmat'_{\Dim+1}$
    whose pivots are still $\ssx_2$,
    where $\xssx_1<\cdots<\xssx_\ell$.
    
    For $j=\ell,\ell-1,\ldots,2$,
        add $\col{\Rmat'_{\Dim+1}}{\xssx_{j-1}}$
        to $\col{\Rmat'_{\Dim+1}}{\xssx_j}$ and
        $\col{\Vmat'_{\Dim+1}}{\xssx_{j-1}}$ to
        $\col{\Vmat'_{\Dim+1}}{\xssx_j}$.

\item
Let $\Vmat_{\Dim+1}$ be derived from $\Vmat'_{\Dim+1}$
by appending a new column that represents
a chain containing the single cell $\cdec{\ssx}_1$, 
and let $\Rmat_{\Dim+1}$ be derived from $\Rmat'_{\Dim+1}$
by appending $\partial\cdec{\ssx}_1$.
\item Perform a single iteration of the left-to-right persistence reduction~\cite{edelsbrunner2000topological} 
to make $\Rmat_{\Dim+1}$ reduced.
\item Using the algorithm in~\cite{cohen2006vines}, perform $O(\fcnt)$ transpositions to place the columns of 
$\cdec{\ssx}_1$ in the right place in $\Rmat_{\Dim+1}$ and $\Vmat_{\Dim+1}$.

\end{enumerate}

\section{Generalization to arbitrary field coefficients}\label{sec:gen-field}

We describe the generalization of our update algorithm to coefficients in an arbitrary field $\field$.
The update for outward contraction in \Cref{sec:out-contra}
generalizes immediately because it is expressed entirely as matrix multiplications. So 
we only describe here how to generalize the update for outward expansion,
which boils down to adapting
\Cref{obsv:col-change,prop:recov-piv}.
With a general $\field$,
each cell needs an orientation.
While orientations of most cells can be assigned arbitrarily,
we have one restriction: since the $\DG$-cells $\ssx_1$,
$\ssx_2$ have the same set of vertices, let $\ssx_1$,
$\ssx_2$ have an orientation induced by the same ordering of vertices.
We then have:
\begin{proposition}[Adaption of \Cref{obsv:col-change} to arbitrary $\field$]
\label{obsv:col-change-adp}
Let $\xssx$ be any $(\Dim+1)$-cell in $\filtxnz$.
Then,
$\matel{\Rmatx_{\Dim+1}}{\ssx_0}{\xssx}=\matel{\Rmat'_{\Dim+1}}{\ssx_1}{\xssx}+\matel{\Rmat'_{\Dim+1}}{\ssx_2}{\xssx}$,
and
$\matel{\Rmatx_{\Dim+1}}{\xssxx}{\xssx}=\matel{\Rmat'_{\Dim+1}}{\xssxx}{\xssx}$
for any other $\Dim$-cell $\xssxx$.
\end{proposition}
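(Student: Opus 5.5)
The plan is to prove the identity at the level of boundary matrices and then transport it to $\Rmatx_{\Dim+1}$ and $\Rmat'_{\Dim+1}$ using that both are computed from the same coefficient matrix, $\Vmat'_{\Dim+1}=\Vmatx_{\Dim+1}$. Recall from \Cref{sec:exp-up} that $\Rmat'_{\Dim+1}=\Dmat'_{\Dim+1}\Vmat'_{\Dim+1}$ with $\Dmat'_{\Dim+1}=\Dmat_{\Dim+1}\Tmat$, and that $\Rmatx_{\Dim+1}=\Dmatx_{\Dim+1}\Vmatx_{\Dim+1}$ with $\Dmatx_{\Dim+1}=\Smat\Dmat_{\Dim+1}\Tmat=\Smat\Dmat'_{\Dim+1}$. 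Combining these gives
\[
\Rmatx_{\Dim+1}=\Smat\Dmat'_{\Dim+1}\Vmatx_{\Dim+1}=\Smat\Rmat'_{\Dim+1}.
\]
Since left multiplication by $\Smat$ sets the row of $\ssx_0$ equal to the sum of the rows of $\ssx_1$ and $\ssx_2$ and copies every other row unchanged, both claimed identities follow for each column $\xssx$.

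The key step, and the only real obstacle, is checking that $\Dmatx_{\Dim+1}=\Smat\Dmat_{\Dim+1}\Tmat$ still holds over an arbitrary $\field$, with $\Smat$ having coefficient $+1$ on both $\ssx_1$ and $\ssx_2$. Concretely, I need that for every $(\Dim+1)$-cell $\tsx_0$ of $\filtxnz$,
\[
\matel{\Dmatx_{\Dim+1}}{\ssx_0}{\tsx_0}=\matel{\Dmat_{\Dim+1}}{\ssx_1}{\tsx_0}+\matel{\Dmat_{\Dim+1}}{\ssx_2}{\tsx_0}.
\]
By \Cref{obsv:contra-adj-change}, every such coface has exactly one of $\ssx_1,\ssx_2$ as a face in $\filtnz$. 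This is because its lifetime interval lies inside exactly one of the two intervals of $\ssx$, which are separated by the deletion and reinsertion at $\cplx_{i-1}\bakinctosp{\ssx}\cplx_i\inctosp{\ssx}\cplx_{i+1}$. So at most one of the two summands on the right is nonzero, and the identity reduces to a statement about signs: the incidence number of $\ssx_0$ in $\partial\tsx_0$ in $\filtxnz$ must equal that of the one relevant $\ssx_j$ in $\filtnz$.

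To get the signs, I will use the stated orientation convention: $\ssx_1$ and $\ssx_2$ are oriented by the same vertex ordering, and $\ssx_0$ is the identified cell carrying that orientation. Each cell in the conversion of \Cref{sec:conversion} comes from a simplex, so incidence numbers are the simplicial ones $(-1)^k$. These are determined only by the vertex orderings of $\tsx$ and of the face, and they are the same whether the face is called $\ssx_0$, $\ssx_1$ or $\ssx_2$. The same argument covers the cone cells, since $\partial(\oG\cdot c)=c-\oG\cdot\partial c$ is determined by $c$. This forces the all-$+1$ coefficients in $\Smat$. The remaining entries of $\Dmat_{\Dim+1}$ and $\Dmatx_{\Dim+1}$ agree by \Cref{obsv:contra-adj-change}, and removing the column of $\cdec{\ssx}_1$ via $\Tmat$ is orientation-free. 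Once this is established, the factorization argument of the first paragraph finishes the proof.
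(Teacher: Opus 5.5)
Your proposal is correct and is essentially the paper's argument. The paper's proof, by reference to that of \Cref{obsv:col-change}, computes the $\ssx_0$-entry of the boundary of the chain $\col{\Vmatx_{\Dim+1}}{\xssx}$ by sorting its cells according to whether they have $\ssx_1$ or $\ssx_2$ as a face; that is exactly the entrywise content of your identity $\Rmatx_{\Dim+1}=\Smat\Dmat'_{\Dim+1}\Vmatx_{\Dim+1}=\Smat\Rmat'_{\Dim+1}$, and your sign check (common vertex-ordering orientation of $\ssx_1,\ssx_2$ making both coefficients of $\Smat$ equal to $+1$) supplies what the paper leaves implicit in its orientation convention.
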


We omit the proof of \Cref{obsv:col-change-adp}, which follows the same idea
as the proof of \Cref{obsv:col-change}.
Notice that \Cref{prop:col-piv} is still true here.

\begin{proposition}[Adaption of \Cref{prop:recov-piv} to arbitrary $\field$]
\label{prop:recov-piv-adp}
For any  
$(\Dim+1)$-cells $\xssx$, $\xssxx$ in $\filtxnz$ with
$\pivot(\col{\Rmat'_{\Dim+1}}{\xssx})=\pivot(\col{\Rmat'_{\Dim+1}}{\xssxx})=\ssx_2$,
let $\aG_2=\matel{\Rmat'_{\Dim+1}}{\ssx_2}{\xssx}$ and
$\gG_2=\matel{\Rmat'_{\Dim+1}}{\ssx_2}{\xssxx}$.
Then, for $\mu=-\gG_2/\aG_2$, $\pivot(\mu\cdot\col{\Rmatx_{\Dim+1}}{\xssx}+\col{\Rmatx_{\Dim+1}}{\xssxx})=\pivot(\mu\cdot\col{\Rmat'_{\Dim+1}}{\xssx}+\col{\Rmat'_{\Dim+1}}{\xssxx})$.
\end{proposition}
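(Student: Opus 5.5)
We plan to prove the statement by comparing the two combined columns entry by entry, using \Cref{obsv:col-change-adp}. Write $\mathrm{col}_1=\mu\cdot\col{\Rmatx_{\Dim+1}}{\xssx}+\col{\Rmatx_{\Dim+1}}{\xssxx}$ and $\mathrm{col}_2=\mu\cdot\col{\Rmat'_{\Dim+1}}{\xssx}+\col{\Rmat'_{\Dim+1}}{\xssxx}$. Both columns are linear combinations of columns related by \Cref{obsv:col-change-adp}, and that relation is linear in the column. Hence $\mathrm{col}_1$ and $\mathrm{col}_2$ satisfy the same relation: $\mathrm{col}_1[\ssx_0]=\mathrm{col}_2[\ssx_1]+\mathrm{col}_2[\ssx_2]$, and $\mathrm{col}_1[\xssxx']=\mathrm{col}_2[\xssxx']$ for every other $\Dim$-cell $\xssxx'$. (We rename the summation variable to avoid a clash with the $\xssxx$ in the statement.)

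The key step is then immediate. By the choice $\mu=-\gG_2/\aG_2$ (note $\aG_2\neq0$ since $\ssx_2$ is the pivot of $\col{\Rmat'_{\Dim+1}}{\xssx}$), we get $\mathrm{col}_2[\ssx_2]=\mu\aG_2+\gG_2=0$. Therefore $\mathrm{col}_1[\ssx_0]=\mathrm{col}_2[\ssx_1]$.

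Next we compare pivots, identifying rows as in the proof of \Cref{prop:col-piv}. We view $\Dim$-cells of $\filtxnz$ as $\Dim$-cells of $\filtnz$, with $\ssx_0$ corresponding to $\ssx_1$, and note that the ordering among all rows other than $\ssx_2$ is preserved. Both original columns of $\Rmat'_{\Dim+1}$ have pivot $\ssx_2$, so all their entries below $\ssx_2$ vanish. Hence $\mathrm{col}_2$ is zero strictly below $\ssx_2$, and also at $\ssx_2$. By the entrywise agreement, $\mathrm{col}_1$ is zero at every row that corresponds to a cell after $\ssx_2$. The remaining rows are those before $\ssx_2$, and there $\mathrm{col}_1$ and $\mathrm{col}_2$ agree entry by entry: this includes the row $\ssx_0\leftrightarrow\ssx_1$, by the key step. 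Consequently the last nonzero entries of the two columns occur at corresponding rows, or both columns are zero. This gives $\pivot(\mathrm{col}_1)=\pivot(\mathrm{col}_2)$.

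The only delicate point is the bookkeeping of the row order. The cell $\ssx_1$ precedes $\ssx_2$ in $\filtnz$, and $\ssx_0$ takes $\ssx_1$'s position. So the row $\ssx_2$, which exists only in $\Rmat'_{\Dim+1}$, must be handled separately, and we have just shown that it contributes zero. I expect no real obstacle beyond stating this identification carefully.
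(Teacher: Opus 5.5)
Your proposal is correct and follows the paper's proof. Both arguments show three things: $\mathrm{col}_2[\ssx_2]=0$, entrywise agreement on all rows outside $\{\ssx_0,\ssx_1,\ssx_2\}$, and $\mathrm{col}_1[\ssx_0]=\mathrm{col}_2[\ssx_1]$ via \Cref{obsv:col-change-adp}. The differences are minor: you get the last identity directly from linearity of that relation, whereas the paper expands it explicitly in $\aG_0,\aG_1,\aG_2,\gG_0,\gG_1,\gG_2$, and you also spell out the row-order identification ($\ssx_0\leftrightarrow\ssx_1$, with $\ssx_1$ before $\ssx_2$) that the paper leaves implicit.
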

\begin{proof}
Let $\mathrm{col}_1=\mu\cdot\col{\Rmatx_{\Dim+1}}{\xssx}+\col{\Rmatx_{\Dim+1}}{\xssxx}$
 and $\mathrm{col}_2=\mu\cdot\col{\Rmat'_{\Dim+1}}{\xssx}+\col{\Rmat'_{\Dim+1}}{\xssxx}$.
Notice that $\mathrm{col}_2[\ssx_2]=0$ and
$\mathrm{col}_1[\xssxx]=\mathrm{col}_2[\xssxx]$
for any $\xssxx\not\in\{\ssx_0,\ssx_1,\ssx_2\}$.
So we only need to show $\mathrm{col}_1[\ssx_0]=\mathrm{col}_2[\ssx_1]$.
Let
$\aG_0=\matel{\Rmatx_{\Dim+1}}{\ssx_0}{\xssx}$,
$\gG_0=\matel{\Rmatx_{\Dim+1}}{\ssx_0}{\xssxx}$,
$\aG_1=\matel{\Rmat'_{\Dim+1}}{\ssx_1}{\xssx}$, and
$\gG_1=\matel{\Rmat'_{\Dim+1}}{\ssx_1}{\xssxx}$.
By \Cref{obsv:col-change-adp},
we have $\aG_0=\aG_1+\aG_2$ and $\gG_0=\gG_1+\gG_2$.
Then,
\[\mathrm{col}_2[\ssx_1]=(-\gG_2/\aG_2)\aG_1+\gG_1=(1/\aG_2)(-\aG_1\gG_2+\aG_2\gG_1).\]
Also,
\begin{align*}
\mathrm{col}_1[\ssx_0]&=(-\gG_2/\aG_2)\aG_0+\gG_0=(-\gG_2/\aG_2)(\aG_1+\aG_2)+\gG_1+\gG_2\\
&=(1/\aG_2)(-\aG_1\gG_2-\aG_2\gG_2+\aG_2\gG_1+\aG_2\gG_2)=(1/\aG_2)(-\aG_1\gG_2+\aG_2\gG_1)=\mathrm{col}_2[\ssx_1].\qedhere
\end{align*}
\end{proof}

We can then perform the update for outward expansion
for degree $\Dim+1$ in \Cref{sec:exp-up}. The generalization to field $\field$
for degree $\Dim$, $\Dim+2$ also follows.

\end{document}